\documentclass[conference,letterpaper]{IEEEtran}

\addtolength{\topmargin}{9mm}
\usepackage[utf8]{inputenc} 
\usepackage[T1]{fontenc}
\usepackage{url}
\usepackage{ifthen}
\usepackage{cite}
\usepackage[cmex10]{amsmath}
\usepackage{xcolor}

\interdisplaylinepenalty=2500

\usepackage{amssymb, amsthm, amsfonts, mathtools}

\newtheorem{theorem}{Theorem}
\newtheorem{lemma}{Lemma}
\newtheorem{corollary}{Corollary}

\theoremstyle{plain}

\usepackage{graphicx}
\usepackage{subcaption}

\newcommand{\part}[2]{\frac{\partial #1}{\partial #2}}
\newcommand{\N}{\mathcal{N}}
\newcommand{\R}{\mathbb{R}}
\newcommand{\E}{\mathbb{E}}
\newcommand{\HS}{\mathrm{HS}}
\DeclareMathOperator{\Tr}{Tr}
\DeclareMathOperator{\Var}{Var}
\DeclareMathOperator{\Cov}{Cov}
\DeclareMathOperator{\grad}{grad}
\DeclareMathOperator{\Hess}{Hess}

\hyphenation{op-tical net-works semi-conduc-tor}

\begin{document}
\title{Convexity of mutual information\\along the Ornstein-Uhlenbeck flow} 

\author{\IEEEauthorblockN{Andre Wibisono}
\IEEEauthorblockA{
College of Computing \\
Georgia Institute of Technology \\
Atlanta, GA 30313 \\
wibisono@gatech.edu}
\and
\IEEEauthorblockN{Varun Jog}
\IEEEauthorblockA{
Department of Electrical \& Computer Engineering \\
University of Wisconsin - Madison\\
Madison, WI 53706 \\
vjog@wisc.edu}
}

\maketitle

\begin{abstract}
We study the convexity of mutual information as a function of time along the flow of the Ornstein-Uhlenbeck process.
We prove that if the initial distribution is strongly log-concave, then mutual information is eventually convex, i.e., convex for all large time.
In particular, if the initial distribution is sufficiently strongly log-concave compared to the target Gaussian measure, then mutual information is always a convex function of time.
We also prove that if the initial distribution is either bounded or has finite fourth moment and Fisher information, then mutual information is eventually convex.
Finally, we provide counterexamples to show that mutual information can be nonconvex at small time.
\end{abstract}

\section{Introduction}

The Ornstein-Uhlenbeck (OU) process is the simplest stochastic process after Brownian motion,
and it is the most general stochastic process for which we know the exact solution, which is an exponential interpolation of Gaussian noise. The OU process plays an important role in theory and applications.
The OU process provides an interpolation between any distribution and a Gaussian along a constant covariance path.
 This property has been a vital tool to prove the optimality of Gaussian in various information theoretic inequalities, including the entropy power inequality~\cite{Blachman65,Rioul11,MB07}.

In applications, the OU process appears as the continuous-time limit of basic Gaussian autoregressive models.
It also appears as the approximate dynamics of stochastic algorithms after linearization around stationary points~\cite{MHB16}. The OU process is a model example for general stochastic processes, and serves as a testbed to examine and test conjectures for the general case. Indeed, our approach to the OU process in this paper is motivated by a desire to understand how various information-theoretic quantities such as entropy, mutual information, and Fisher information evolve in general stochastic processes. This builds up on our previous works \cite{WibisonoJL17, WJ18}, where we carried out a similar analysis for the heat flow, and complements  recent investigations along closely related lines \cite{GuoEtAl11, Che15, Tos15a,  ZhaEtal18}.

Both the heat diffusion and the OU processes are examples of Fokker-Planck processes. The Fokker-Planck process is the sampling analogue of the usual gradient flow for optimization;
indeed, we can view the Fokker-Planck process as the gradient flow of the relative entropy functional in the space of measures with the Wasserstein metric~\cite{JKO98,OV00,Vil03,Vil08}.
This interpretation provides valuable information on the behavior of relative entropy.
For example, if the target measure is log-concave, then relative entropy is decreasing in a convex manner along the Fokker-Planck process. Such a result also follows from the seminal work of Bakry and Emery \cite{BakEm85, BakEtal13}, where diffusion processes have been examined in exquisite detail.

The behavior of mutual information, on the other hand, is not as well understood.
By the data processing inequality one may note that mutual information is decreasing along the Fokker-Planck process, which means the first time derivative is negative.
Furthermore, we can derive identities relating information and estimation quantities~\cite{WibisonoJL17}, which generalize the I-MMSE relationship for the Gaussian channel~\cite{GuoEtAl05}.
At the level of second time derivative, however, the behavior of mutual information is more complicated. In~\cite{WJ18} we studied the basic case of the heat flow, or the Brownian motion, and we showed there are interesting non-convex behaviors of mutual information especially at initial time. 

In this paper we study the corresponding questions for the OU process. Notice that the OU process may be obtained by rescaling time and space in the heat flow, and one could hope that convexity results for the heat flow \cite{WJ18} carry over with little work. However, this does not appear to be the case since rescaling does not preserve signs of higher order derivatives---a fact also noted in \cite{Che15} with regards to the derivatives of entropy obtained in \cite{GuoEtAl11}.
For simplicity in this paper we treat the case when the target Gaussian measure has isotropic covariance, but our technique extends to the general case.
We show that the results for the heat flow qualitatively extend to the OU process, but now with a subtle interplay with the size of the covariance of the target Gaussian measure.

Our first main result states that if the initial distribution is strongly log-concave, then mutual information is eventually convex.
In particular, if the initial distribution is sufficiently strongly log-concave compared to the target Gaussian measure, then mutual information is always convex.
We also prove that if the initial distribution is either bounded, or has finite fourth moment and Fisher information, then mutual information is eventually convex, with a time threshold that depends on the target covariance.
We also provide counterexamples to show that mutual information can be nonconvex at some small time.
In particular, when the initial distribution is a mixture of point masses, we show that mutual information along the OU process initially starts at the discrete entropy, which is the same behavior as seen in the heat flow.
In the limit of infinite target covariance, in which case the OU process becomes the Brownian motion, most of our results recover the corresponding results for the heat flow from~\cite{WJ18}.

\section{Background and problem setup}

\subsection{The Ornstein-Uhlenbeck (OU) process}

Let $\nu = \N(0,\frac{1}{\alpha} I)$ be the Gaussian measure in $\R^n$ with mean $0$ and isotropic covariance $\frac{1}{\alpha} I$, for some $\alpha > 0$.
The {\em Ornstein-Uhlenbeck} (OU) process in $\R^n$ with target measure $\nu = \N(0,\frac{1}{\alpha}I)$ is the stochastic differential equation
\begin{align}\label{Eq:OU}
dX = -\alpha X \, dt + \sqrt{2} \, dW
\end{align}
where $X = (X_t)_{t \ge 0}$ is a stochastic process in $\R^n$ and $W = (W_t)_{t \ge 0}$ is the standard Brownian motion in $\R^n$ starting at $W_0 = 0$.
The OU process admits a closed-form solution in terms of It\^o integral:
$X_t = e^{-\alpha t} ( X_0 + \sqrt{2} \int_0^t e^{\alpha s} \, dW_s)$.
At each time $t > 0$, 
the solution $X_t$ has equality in distribution
\begin{align}\label{Eq:OUSoln}
X_t \stackrel{d}{=} e^{-\alpha t} X_0 + \sqrt{\frac{1-e^{-2\alpha t}}{\alpha}} Z
\end{align}
where $Z \sim \N(0,I)$ is independent of $X_0$.
As $t \to \infty$, $X_t \stackrel{d}{\to} \frac{1}{\alpha} Z \sim \N(0,\frac{1}{\alpha} I)$ exponentially fast, so indeed $\nu = \N(0,\frac{1}{\alpha} I)$ is the target stationary measure for the OU process.

In the space of probability measures, the OU process~\eqref{Eq:OU} corresponds to the following partial differential equation:
\begin{align}\label{Eq:OUCE}
\part{\rho}{t} = \alpha \nabla \cdot (\rho x) + \Delta \rho.
\end{align}
Here $\rho = \rho(x,t)$ is a probability density over space $x \in \R^n$ for each time $t \ge 0$,  $\nabla \cdot = (\part{}{x_1},\dots,\part{}{x_n})^\top$ is the divergence, and $\Delta = \sum_{i=1}^n \part{^2}{x_i^2}$ is the Laplacian operator.
Concretely, if the random variable $X_t \sim \rho_t$ evolves following the OU process~\eqref{Eq:OU}, then its probability density function $\rho_t(x) = \rho(x,t)$ evolves following the equation~\eqref{Eq:OUCE} in the space of measures.

From the solution~\eqref{Eq:OUSoln} in terms of random variables, we also have the following solution for~\eqref{Eq:OUCE} in the space of measures: 
\begin{align}\label{Eq:OUCESoln}
\rho_t(y) = \frac{1}{(2 \pi \tau_\alpha(t))^{\frac{n}{2}}} \int_{\R^n} \rho_0(x) e^{-\frac{\|y-e^{-\alpha t} x\|^2}{2\tau_\alpha(t)}} \, dx
\end{align}
where $\tau_\alpha(t) = \frac{1}{\alpha}(1-e^{-2\alpha t})$ for $t > 0$.
One may also directly verify that~\eqref{Eq:OUCESoln} satisfies the equation~\eqref{Eq:OUCE}.
We refer to the flow of~\eqref{Eq:OUCE}, or the exact solution~\eqref{Eq:OUCESoln} above, as the {\em Ornstein-Uhlenbeck (OU) flow} in the space of measures.
Note that $\tau_\alpha(t) \to 2t$ as $\alpha \to 0$, and the Ornstein-Uhlenbeck process above recovers the Brownian motion or the heat flow.

\subsection{Derivatives of relative entropy along the OU flow}

For a reference probability measure $\nu$ on $\R^n$, let 
$$H_\nu(\rho) = \int_{\R^n} \rho(x) \log \frac{\rho(x)}{\nu(x)} \, dx$$
denote the {\em relative entropy} of a probability measure $\rho$ with respect to $\nu$.
This is also known as the Kullback-Leibler (KL) divergence.
Relative entropy has the property that $H_\nu(\rho) \ge 0$, and $H_\nu(\rho) = 0$ if and only if $\rho = \nu$.

The {\em relative Fisher information} of $\rho$ with respect to $\nu$ is
$$J_\nu(\rho) = \int_{\R^n} \rho(x) \left\| \nabla \log \frac{\rho(x)}{\nu(x)} \right\|^2 \, dx.$$
The {\em relative second-order Fisher information} of $\rho$ with respect to $\nu$ is
$$K_\nu(\rho) = \int_{\R^n} \rho(x) \left\| \nabla^2 \log \frac{\rho(x)}{\nu(x)} \right\|_{\HS}^2 \, dx$$
where $\|M\|_{\HS}^2 = \sum_{i,j=1}^n M_{ij}^2$ is the Hilbert-Schmidt (or Frobenius) norm of a symmetric matrix $M = (M_{ij}) \in \R^{n \times n}$.
For $X \sim \rho$, we also write $H_\nu(X)$, $J_\nu(X)$, and $K_\nu(X)$ in place of $H_\nu(\rho)$, $J_\nu(\rho)$, and $K_\nu(\rho)$, respectively.

We recall the interpretation of the OU flow as the gradient flow of relative entropy with respect to $\nu = \N(0,\frac{1}{\alpha} I)$ in the space of probability measures with the Wasserstein metric; see for example~\cite{Vil03,Vil08,OV00},
or Appendix~\ref{App:FP}.
This allows us to translate general gradient flow relations to obtain the following identities for the time derivatives of relative entropy along the OU flow.

\begin{lemma}\label{Lem:OUDer}
Along the OU flow for $\nu = \N(0,\frac{1}{\alpha}I)$,
\begin{align*}
\frac{d}{dt} H_\nu(X_t) &= -J_\nu(X_t), \\
\frac{d^2}{dt^2} H_\nu(X_t) &= 2K_\nu(X_t) + 2\alpha J_\nu(X_t).
\end{align*}
\end{lemma}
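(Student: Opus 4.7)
The plan is to rewrite~\eqref{Eq:OUCE} as a Wasserstein gradient flow of $H_\nu$ and then compute the first two time derivatives by differentiating under the integral, integrating by parts, and invoking the Bakry-\'Emery Bochner identity for the OU generator.

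First I would put~\eqref{Eq:OUCE} in continuity form. Since $\nu \propto e^{-\alpha \|x\|^2/2}$ satisfies $\nabla \log \nu = -\alpha x$, we have $\alpha \nabla \cdot (\rho x) = -\nabla \cdot (\rho \nabla \log \nu)$, while $\Delta \rho = \nabla \cdot (\rho \nabla \log \rho)$. Combining these,
\begin{align*}
\part{\rho_t}{t} = \nabla \cdot \bigl( \rho_t \, \nabla \log(\rho_t/\nu) \bigr),
\end{align*}
which identifies the OU flow as transport with velocity $v_t = -\nabla \log(\rho_t/\nu)$, the Wasserstein gradient of $H_\nu$. For the first identity I would differentiate under the integral, substitute this continuity form, and integrate by parts; the constant contribution drops because $\int \part{\rho_t}{t} \, dx = 0$, leaving
\begin{align*}
\frac{d}{dt} H_\nu(\rho_t) = -\int \rho_t \, \|\nabla \log(\rho_t/\nu)\|^2 \, dx = -J_\nu(\rho_t).
\end{align*}

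For the second identity I would use the Bakry-\'Emery $\Gamma$-calculus for the OU generator $L f = \Delta f - \alpha x \cdot \nabla f$ (the adjoint of the Fokker-Planck operator). Its Bochner identity reads
\begin{align*}
\Gamma_2(f) = \|\nabla^2 f\|_{\HS}^2 + \alpha \|\nabla f\|^2,
\end{align*}
where $\Gamma_2(f) = \tfrac{1}{2} L \|\nabla f\|^2 - \langle \nabla f, \nabla L f \rangle$; this is just the CD$(\alpha,\infty)$ condition for OU. Setting $f_t = \log(\rho_t/\nu)$, the duality between $L$ and the Fokker-Planck operator together with the continuity form gives
\begin{align*}
\frac{d}{dt} J_\nu(\rho_t) = -2 \int \rho_t \, \Gamma_2(f_t) \, dx = -2 K_\nu(\rho_t) - 2 \alpha J_\nu(\rho_t),
\end{align*}
and hence $\frac{d^2}{dt^2} H_\nu(\rho_t) = -\frac{d}{dt} J_\nu(\rho_t) = 2 K_\nu(\rho_t) + 2 \alpha J_\nu(\rho_t)$.

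The main obstacle is the analytic justification of these formal manipulations: differentiation under the integral, interchange of limits, and the vanishing of boundary terms in the (iterated) integration by parts. I would discharge this using the convolution representation~\eqref{Eq:OUCESoln}, which for any $t > 0$ makes $\rho_t$ smooth with Gaussian tails, so that $\log(\rho_t/\nu)$ and its first two derivatives grow at most polynomially and are integrable against $\rho_t$. A standard approximation argument (e.g., mollifying $\rho_0$ and passing to the limit) then makes every step rigorous.
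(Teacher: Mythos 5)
Your proof is correct and follows essentially the same route as the paper: the paper likewise identifies the OU flow as the Wasserstein gradient flow of $H_\nu$, takes the first identity from the general gradient-flow relation, and obtains the second from the Hessian formula for relative entropy (Villani's Formula 15.7), whose leftover term $\int \rho \,\langle \nabla\log(\rho/\nu), (\nabla^2 f)\,\nabla\log(\rho/\nu)\rangle\,dx$ reduces to $\alpha J_\nu(\rho)$ when $\nabla^2 f = \alpha I$. Your Bochner/$\Gamma_2$ computation is precisely the semigroup-side derivation of that same decomposition into $K_\nu + \alpha J_\nu$, so the two arguments coincide in substance.
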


Note that $J_\nu(X_t) \ge 0$ and $K_\nu(X_t) \ge 0$.
So by Lemma~\ref{Lem:OUDer}, the first derivative of $H_\nu(X_t)$ is negative while the second derivative is positive, which means relative entropy is decreasing in a convex manner along the OU flow.

\subsection{Derivatives of mutual information along the OU flow}

Recall that given a functional $F(Y) \equiv F(\rho_Y)$ of a random variable $Y \sim \rho_Y$, we can define its {\em mutual version} $F(X;Y)$ for a joint random variable $(X,Y) \sim \rho_{XY}$ by 
\begin{align}\label{Eq:MutF}
F(X;Y) = F(Y\,|\,X) - F(Y)
\end{align}
where $F(Y\,|\,X) = \int \rho_X(x) F(\rho_{Y|X}(\cdot\,|\,x)) \, dx$ is the expected value of $F$ on the conditional random variables $Y|\{X=x\} \sim \rho_{Y|X}(\cdot\,|\,x)$, averaged over $X \sim \rho_X$.
The mutual version picks up only the nonlinear part, so two functionals that differ by a linear function have the same mutual version.

For example, for any $\nu$, the relative entropy $H_\nu(\rho)$ differs from the negative entropy $-H(\rho) = \int \rho(x) \log \rho(x) \, dx$ by the linear (in $\rho$) term $\int \rho(x) \log \nu(x) \, dx$.
Therefore, the mutual version of relative entropy is equal to the mutual version of negative entropy, which is {\em mutual information}:
$$H_\nu(X;Y) = I(X;Y) = H(Y) - H(Y\,|\,X).$$

We apply this definition to the joint random variable $(X,Y) = (X_0,X_t)$ where $X_t$ is the OU flow from $X_0$.
By the linearity of the OU channel, Lemma~\ref{Lem:OUDer} yields the following identities for the time derivatives of mutual information along the OU flow.

\begin{lemma}\label{Lem:OUMutDer}
Along the OU flow for $\nu = \N(0,\frac{1}{\alpha}I)$,
\begin{align*}
\frac{d}{dt} I(X_0;X_t) &= -J_\nu(X_0;X_t), \\
\frac{d^2}{dt^2} I(X_0;X_t) &= 2K_\nu(X_0;X_t) + 2\alpha J_\nu(X_0;X_t).
\end{align*}
\end{lemma}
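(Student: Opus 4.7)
The plan is to derive Lemma~\ref{Lem:OUMutDer} from Lemma~\ref{Lem:OUDer} by applying the mutual-version construction~\eqref{Eq:MutF} termwise. The key observation that makes this work is the one the authors call ``linearity of the OU channel'': the OU transition kernel $p_t(y\,|\,x_0)$ is itself a probability density that satisfies the OU Fokker--Planck equation~\eqref{Eq:OUCE} (with initial condition $\delta_{x_0}$), because, for each fixed $x_0$, equation~\eqref{Eq:OUCESoln} describes exactly the OU flow starting from $\delta_{x_0}$. Consequently Lemma~\ref{Lem:OUDer} applies not only to the marginal law $\rho_{X_t}$ but also, for each $x_0$, to the conditional law $\rho_{X_t \mid X_0 = x_0}$.

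First I would apply Lemma~\ref{Lem:OUDer} to the conditional flow at each $x_0$, obtaining
\[
\frac{d}{dt} H_\nu(X_t\,|\,X_0 = x_0) = -J_\nu(X_t\,|\,X_0=x_0),
\]
and the analogous identity for the second derivative. Averaging over $X_0 \sim \rho_{X_0}$ (and justifying the interchange of $\int \rho_{X_0}(dx_0)$ with $\tfrac{d}{dt}$ and $\tfrac{d^2}{dt^2}$ for $t > 0$, where everything is smooth by the Gaussian smoothing in~\eqref{Eq:OUCESoln}), this yields
\[
\frac{d}{dt} H_\nu(X_t\,|\,X_0) = -J_\nu(X_t\,|\,X_0),
\qquad
\frac{d^2}{dt^2} H_\nu(X_t\,|\,X_0) = 2 K_\nu(X_t\,|\,X_0) + 2\alpha J_\nu(X_t\,|\,X_0).
\]

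Next I would subtract from these the corresponding marginal identities supplied by Lemma~\ref{Lem:OUDer} applied to $\rho_{X_t}$, and invoke the definition~\eqref{Eq:MutF} to recognize each resulting difference as the mutual version of the functional in question. Since the mutual version of $H_\nu$ is $I(X_0;X_t)$, this gives exactly the two claimed formulas
\[
\frac{d}{dt} I(X_0;X_t) = -J_\nu(X_0;X_t),
\qquad
\frac{d^2}{dt^2} I(X_0;X_t) = 2 K_\nu(X_0;X_t) + 2\alpha J_\nu(X_0;X_t).
\]

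The main obstacle, as usual in this kind of derivation, is the justification of the interchange of expectation and time-derivative, plus the fact that the conditional flow starts at a point mass, so Lemma~\ref{Lem:OUDer} can only be invoked for $t > 0$. Both issues are routine but require some care: for $t > 0$ the Gaussian convolution in~\eqref{Eq:OUCESoln} makes $\rho_{X_t\,|\,X_0=x_0}$ a smooth, strictly positive density with finite relative entropy and all relevant Fisher-type quantities finite, and standard dominated-convergence / Fubini arguments then justify the commutation of $\int \rho_{X_0}(dx_0)$ with the two time derivatives, giving the identities for all $t > 0$.
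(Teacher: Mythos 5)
Your proposal is correct and follows essentially the same route as the paper: apply Lemma~\ref{Lem:OUDer} to the conditional densities $\rho_{X_t|X_0}(\cdot\,|\,x_0)$ (using that each conditional is itself an OU flow from $\delta_{x_0}$), average over $X_0$, interchange expectation with the time derivatives, and subtract the marginal identities to recover the mutual versions. Your added care about the $t>0$ restriction and the dominated-convergence justification of the interchange only makes explicit what the paper leaves implicit.
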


We discuss the signs of the first and second derivatives.

\subsubsection{First derivative of mutual information}
\label{Sec:FirDerMut}

We recall that for a general joint random variable $(X,Y)$, the mutual relative Fisher information $J_\nu(X;Y) = J_\nu(Y\,|\,X) - J_\nu(Y)$ is equal to the {\em backward Fisher information} $\Phi(X\,|\,Y)$, which is manifestly nonnegative.
Furthermore, we also recall that along the OU flow, the backward Fisher information $\Phi(X_0\,|\,X_t)$ is proportional to the conditional variance of $X_0$ given $X_t$;
see for example~\cite[$\S$III-D.2]{WibisonoJL17}, or Appendix~\ref{App:mmse}.

\begin{lemma}\label{Lem:mmse}
Along the OU flow for $\nu = \N(0,\frac{1}{\alpha} I)$,
$$J_\nu(X_0;X_t) = \frac{\alpha^2 e^{-2\alpha t}}{(1-e^{-2\alpha t})^2} \Var(X_0\,|\,X_t).$$
\end{lemma}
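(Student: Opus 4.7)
The plan is to compute $J_\nu(X_0;X_t)$ directly and show that it reduces to a $\nu$-independent scaled posterior variance. The computation splits into two natural steps: first, peel off the dependence on $\nu$ to rewrite the mutual relative Fisher information as an expected squared score difference; second, plug in the Gaussian transition density of the OU channel and read off the conditional variance.

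For the first step, let $s(y\,|\,x_0) := \nabla_y \log p(y\,|\,x_0)$ and $s(y) := \nabla_y \log p(y)$ denote the conditional and marginal scores of $X_t$. Differentiating the marginalization identity $p(y) = \int p(y\,|\,x_0) p_{X_0}(x_0)\,dx_0$ under the integral yields the Bayes-type relation $s(y) = \E[s(X_t\,|\,X_0) \mid X_t = y]$. Expanding the squared norms in
$$J_\nu(X_t\,|\,X_0) - J_\nu(X_t) = \E\bigl[\|s(X_t\,|\,X_0) - \nabla\!\log\nu(X_t)\|^2\bigr] - \E\bigl[\|s(X_t) - \nabla\!\log\nu(X_t)\|^2\bigr]$$
and applying the tower property with this relation, the cross terms involving $\nabla\log\nu$ cancel, as do the $\|\nabla\log\nu\|^2$ terms between the two expectations, leaving
$$J_\nu(X_0;X_t) = \E\bigl[\|s(X_t\,|\,X_0) - s(X_t)\|^2\bigr].$$
In particular the result does not depend on $\nu$.

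For the second step, the OU solution~\eqref{Eq:OUSoln} gives $X_t \mid \{X_0 = x_0\} \sim \N(e^{-\alpha t} x_0, \tau_\alpha(t) I)$, hence
$$s(y\,|\,x_0) = -\frac{y - e^{-\alpha t} x_0}{\tau_\alpha(t)}.$$
The same differentiation-under-the-integral applied to the marginal $p(y)$ produces a Tweedie-type formula
$$s(y) = -\frac{y - e^{-\alpha t}\,\E[X_0 \mid X_t = y]}{\tau_\alpha(t)}.$$
Subtracting gives $s(y\,|\,x_0) - s(y) = \tfrac{e^{-\alpha t}}{\tau_\alpha(t)}\bigl(x_0 - \E[X_0 \mid X_t = y]\bigr)$, and taking the squared norm and averaging over $(X_0, X_t)$ yields
$$\frac{e^{-2\alpha t}}{\tau_\alpha(t)^2}\,\Var(X_0 \mid X_t) = \frac{\alpha^2 e^{-2\alpha t}}{(1-e^{-2\alpha t})^2}\,\Var(X_0 \mid X_t),$$
using $\tau_\alpha(t) = (1-e^{-2\alpha t})/\alpha$. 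This is the claimed identity.

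The only delicate point is the cancellation of the $\nu$-dependent terms in step one; once this Bayes-type reduction is in place, the remainder is a mechanical Gaussian-score computation with no real obstacle. This is essentially the backward-Fisher/MMSE derivation that the lemma statement attributes to~\cite[$\S$III-D.2]{WibisonoJL17}, specialized to the Ornstein-Uhlenbeck channel.
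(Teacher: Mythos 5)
Your proof is correct, but it takes a different route from the paper's. The paper first proves the general identity $J_\nu(X;Y)=\Phi(X\,|\,Y)$ (Lemma~\ref{Lem:JPhi}) by decomposing the \emph{Hessians} $-\nabla^2_y\log(\rho_{Y|X}/\nu)$ and integrating by parts, and then evaluates $\Phi(X_0\,|\,X_t)$ by recognizing the posterior $\rho_{0|t}(\cdot\,|\,y)$ as an exponential family in $x$ with natural parameter proportional to $y$, so that $-\nabla^2_y\log\rho_{0|t}(x\,|\,y)$ is a scaled Hessian of the log-partition function, i.e.\ a scaled posterior covariance (Lemmas~\ref{Lem:HessCov} and~\ref{Lem:PhiOU}). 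You instead stay entirely at first order: a score-orthogonality (Pythagoras) argument, using the Bayes relation $s(y)=\E[s(X_t\,|\,X_0)\mid X_t=y]$ twice via the tower property, kills all $\nu$-dependence and gives $J_\nu(X_0;X_t)=\E\bigl[\|s(X_t\,|\,X_0)-s(X_t)\|^2\bigr]$; then the explicit Gaussian conditional score plus Tweedie's formula identifies the score difference as $\tfrac{e^{-\alpha t}}{\tau_\alpha(t)}(x_0-\E[X_0\mid X_t=y])$. The two routes meet at the same intermediate object, since $\nabla_y\log\rho_{0|t}(x\,|\,y)=s(y\,|\,x)-s(y)$ means your expected squared score difference \emph{is} the backward Fisher information $\Phi(X_0\,|\,X_t)$ of the paper. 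Your argument is shorter and more elementary for this lemma (no integration by parts, no exponential-family machinery); the paper's pays for its extra structure by producing the pointwise matrix identity of Lemma~\ref{Lem:HessCov}, which is reused later (Lemmas~\ref{Lem:HesOU} and~\ref{Lem:KnuOU}) in the second-order analysis. The only thing you elide is the routine justification of differentiating under the integral, which the paper also assumes.
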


Combining Lemma~\ref{Lem:mmse} with the first identity of Lemma~\ref{Lem:OUMutDer} expresses the time derivative of mutual information along the OU flow in terms of the minimum mean-square error (mmse) of estimating $X_0$ from $X_t$; this generalizes the I-MMSE relationship of~\cite{GuoEtAl05} from the Gaussian channel (the heat flow) to the OU flow.

\subsubsection{Second derivative of mutual information}

Unlike $J_\nu(X;Y)$ which is always positive, in general the mutual relative second-order Fisher information $K_\nu(X;Y) = K_\nu(Y\,|\,X) - K_\nu(Y)$ can be negative.
However, if $Y \sim \rho_Y$ is sufficiently log-concave compared to the reference measure $\nu$, then we can lower bound $K_\nu(X;Y)$ and use the additional term $\alpha J_\nu(X;Y)$ in the second identity of Lemma~\ref{Lem:OUMutDer} to offset it.
Altogether, we have the following result.

We say $X \sim \rho$ is $\lambda$-strongly log-concave for some $\lambda > 0$ if $-\nabla^2 \log \rho(x) \succeq \lambda I$ for all $x \in \R^n$.

\begin{lemma}\label{Lem:SuffConv}
Along the OU flow for $\nu = \N(0,\frac{1}{\alpha}I)$, if $X_t \sim \rho_t$ is $\frac{\alpha}{2}$-strongly log-concave, then mutual information $I(X_0;X_t)$ is convex at time $t$.
\end{lemma}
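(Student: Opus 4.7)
By Lemma~\ref{Lem:OUMutDer}, convexity at time $t$ is equivalent to the inequality $K_\nu(X_0;X_t) + \alpha J_\nu(X_0;X_t) \ge 0$. Since $J_\nu(X_0;X_t) \ge 0$ always, the plan is to rewrite $K_\nu(X_0;X_t)$ in a form where the residual negative contribution can be controlled by a multiple of $J_\nu(X_0;X_t)$ under the log-concavity hypothesis.

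The central tool I would use is the conditional covariance decomposition of the Hessian of the marginal log-density, obtained by differentiating $\rho_{X_t}(y) = \int \rho_0(x_0)\,\rho_{X_t\mid X_0}(y\mid x_0)\,dx_0$ twice in $y$:
\[
\nabla^2 \log \rho_{X_t}(y) = \E\bigl[\nabla^2\log\rho_{X_t\mid X_0}(y\mid X_0)\bigm| X_t=y\bigr] + C(y),
\]
where $C(y) \coloneqq \Cov\bigl[\nabla\log\rho_{X_t\mid X_0}(y\mid X_0)\bigm| X_t=y\bigr] \succeq 0$. In the OU setting, $X_t\mid X_0 = x_0$ is $\N(e^{-\alpha t}x_0,\tau_\alpha(t) I)$, which has constant Hessian $-\tau_\alpha(t)^{-1}I$ independent of both $x_0$ and $y$, so the conditional expectation above is deterministic. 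Writing $a \coloneqq \tfrac{\alpha e^{-2\alpha t}}{1-e^{-2\alpha t}}$ and subtracting $\nabla^2 \log\nu = -\alpha I$, this gives
\[
\nabla^2\log\tfrac{\rho_{X_t\mid X_0}}{\nu}(y\mid x_0) = -a\,I, \qquad \nabla^2\log\tfrac{\rho_{X_t}}{\nu}(y) = -a\,I + C(y).
\]
Expanding the Hilbert-Schmidt squares in $K_\nu(X_t\mid X_0) - K_\nu(X_t)$, the $\|{-aI}\|_\HS^2$ contributions cancel and only the cross term survives, yielding the clean identity
\[
K_\nu(X_0;X_t) = 2a\,\E\bigl[\Tr C(X_t)\bigr] - \E\bigl[\|C(X_t)\|_\HS^2\bigr].
\]
The general variance-of-score interpretation of mutual relative Fisher information (which also underlies Lemma~\ref{Lem:mmse}) identifies $\E[\Tr C(X_t)] = J_\nu(X_0;X_t)$.

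To finish, I would feed in the log-concavity assumption. From the expression $C(y) = \nabla^2\log\rho_t(y) + (a+\alpha)I$, the $\tfrac{\alpha}{2}$-strong log-concavity of $\rho_t$ forces $0 \preceq C(y) \preceq (a+\tfrac{\alpha}{2})I$ pointwise, so $\|C(y)\|_\HS^2 \le \|C(y)\|_{\mathrm{op}}\Tr C(y) \le (a+\tfrac{\alpha}{2})\Tr C(y)$. Substituting into the identity for $K_\nu(X_0;X_t)$ gives
\[
K_\nu(X_0;X_t)+\alpha J_\nu(X_0;X_t) \ge (2a+\alpha - a - \tfrac{\alpha}{2})\,J_\nu(X_0;X_t) = (a+\tfrac{\alpha}{2})\,J_\nu(X_0;X_t) \ge 0,
\]
which is convexity at time $t$ via Lemma~\ref{Lem:OUMutDer}.

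The main step I expect to require care is the conditional covariance decomposition of $\nabla^2\log\rho_t$ and the resulting formula for $K_\nu(X_0;X_t)$; once that representation is in place, the Gaussian-conditional structure of the OU channel (which makes $A \coloneqq \nabla^2\log(\rho_{X_t\mid X_0}/\nu)$ deterministic and equal to $-aI$) together with the pointwise bound $C(y)\preceq (a+\tfrac{\alpha}{2})I$ reduces everything to algebra. Standard integrability conditions justifying differentiation under the integral and the vanishing of boundary terms will be implicit throughout.
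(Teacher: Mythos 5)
Your proof is correct, and at its core it is the same mechanism as the paper's: the Bayes-rule decomposition of the Hessian of $\log\rho_t$ into the (deterministic, for the OU channel) conditional Hessian plus the conditional covariance of the score $C(y)=\widetilde\Phi(X_0\,|\,X_t\!=\!y)$, followed by using $\tfrac{\alpha}{2}$-log-concavity to control the negative contribution to $K_\nu(X_0;X_t)$ by a multiple of $J_\nu(X_0;X_t)=\E[\Tr C(X_t)]$. The packaging differs slightly: the paper proves a general identity for arbitrary joint laws (its Lemmas~\ref{Lem:KPsi} and~\ref{Lem:KGau}), lower-bounds the cross term $\langle -\nabla^2\log\rho_t, \widetilde\Phi\rangle_{\HS}$ by $\lambda\Tr\widetilde\Phi$, and then discards the nonnegative term $\Psi(X_0\,|\,X_t)=\E[\|C\|_{\HS}^2]$, arriving at $K_\nu\ge -\alpha J_\nu$; you instead derive the OU-specific identity $K_\nu(X_0;X_t)=2a\,\E[\Tr C]-\E[\|C\|_{\HS}^2]$ (which is the paper's Lemma~\ref{Lem:KnuOU} in disguise, there used for Theorem~\ref{Thm:EventConvFI}) and bound $\|C\|_{\HS}^2\le(a+\tfrac{\alpha}{2})\Tr C$ via the two-sided bound $0\preceq C\preceq(a+\tfrac{\alpha}{2})I$. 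The two inequalities are algebraically equivalent uses of the hypothesis, but by not discarding $\E[\|C\|_{\HS}^2]$ separately you get the marginally sharper conclusion $K_\nu+\alpha J_\nu\ge(a+\tfrac{\alpha}{2})J_\nu$ rather than just nonnegativity; both suffice for convexity via Lemma~\ref{Lem:OUMutDer}.
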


Lemma~\ref{Lem:SuffConv} provides a sufficient condition for mutual information to be convex along the OU flow.
Since the target measure $\nu = \N(0,\frac{1}{\alpha}I)$ is $\alpha$-strongly log-concave, we expect any initial distribution will eventually be transformed to a distribution that is at least $\frac{\alpha}{2}$-strongly log-concave.
We prove this in two cases: when the initial distribution is strongly log-concave, or when the initial distribution is bounded (or a convolution of the two cases).
First, we have the following classical estimate.

\begin{lemma}\label{Lem:SLC}
Along the OU flow for $\nu = \N(0,\frac{1}{\alpha}I)$,
if $X_0 \sim \rho_0$ is $\lambda$-strongly log-concave for some $\lambda > 0$,
then $X_t \sim \rho_t$ is $(\frac{e^{-2\alpha t}}{\lambda} + \frac{1-e^{-2\alpha t}}{\alpha})^{-1}$-strongly log-concave.
\end{lemma}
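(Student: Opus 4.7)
The plan is to reduce to a direct computation using the closed-form solution \eqref{Eq:OUSoln}, which expresses $X_t$ as the independent sum $U+V$ with $U := e^{-\alpha t} X_0$ and $V \sim \N(0, \tfrac{1-e^{-2\alpha t}}{\alpha} I)$. A change of variables gives $-\nabla^2 \log \rho_U(u) = e^{2\alpha t}(-\nabla^2 \log \rho_0)(e^{\alpha t} u) \succeq aI$ with $a := \lambda e^{2\alpha t}$, and the Gaussian $V$ is $b$-strongly log-concave with $b := \alpha/(1-e^{-2\alpha t})$. Since the desired parameter is exactly $(a^{-1} + b^{-1})^{-1}$, the task reduces to showing the convolution fact: the sum of an $a$-strongly log-concave random variable and an independent $b$-strongly log-concave Gaussian is $(a^{-1}+b^{-1})^{-1}$-strongly log-concave.

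For this convolution step I would work directly rather than quote a black-box inequality, since the Gaussian form of $V$ makes the computation clean. Writing $\rho_t(y) = \int \rho_U(x)\rho_V(y-x)\, dx$ and $\phi(y) := -\log \rho_t(y)$, two differentiations under the integral (the Gibbs/Laplace identity) give
$$\nabla^2 \phi(y) = \E_y[\nabla^2 g(y-X)] - \Cov_y\bigl(\nabla g(y-X)\bigr),$$
where $g := -\log \rho_V$ and $X \mid Y = y$ has conditional density $p_y(x) \propto \rho_U(x)\rho_V(y-x)$. Because $g(w) = \tfrac{b}{2}\|w\|^2 + \text{const}$, we have $\nabla^2 g = bI$ and $\nabla g(y-x) = b(y-x)$, so the identity collapses to $\nabla^2 \phi(y) = bI - b^2\, \Cov_y(X)$.

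The main step is then the Brascamp--Lieb covariance bound applied to the conditional density $p_y$: the potential $x \mapsto -\log \rho_U(x) + g(y-x)$ is $(a+b)$-strongly convex in $x$, which yields $\Cov_y(X) \preceq (a+b)^{-1} I$. Substituting gives $\nabla^2 \phi(y) \succeq \bigl(b - \tfrac{b^2}{a+b}\bigr) I = \tfrac{ab}{a+b} I = (a^{-1}+b^{-1})^{-1} I$, and plugging in the values of $a$ and $b$ produces the claimed parameter $\bigl(\tfrac{e^{-2\alpha t}}{\lambda} + \tfrac{1-e^{-2\alpha t}}{\alpha}\bigr)^{-1}$. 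The only genuinely nontrivial input is the Brascamp--Lieb variance inequality, so that is where I expect the substantive work to live; everything else is routine differentiation and linear algebra.
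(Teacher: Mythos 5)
Your proof is correct, and its overall skeleton matches the paper's: both decompose $X_t$ via the explicit solution into $e^{-\alpha t}X_0 + \sqrt{\tau_\alpha(t)}\,Z$, compute that the first summand is $(\lambda e^{2\alpha t})$-strongly log-concave and the Gaussian summand is $\frac{\alpha}{1-e^{-2\alpha t}}$-strongly log-concave, and then invoke preservation of strong log-concavity under convolution in the harmonic-mean form $(a^{-1}+b^{-1})^{-1}$. The difference is entirely in how that last (and only nontrivial) step is handled: the paper cites it as a black box (Saumard--Wellner, Theorem 3.7(b)), whereas you prove it directly, using the identity $\nabla^2\phi(y) = bI - b^2\Cov_y(X)$ obtained by differentiating under the integral against the Gaussian factor, and then controlling $\Cov_y(X)$ via the Brascamp--Lieb covariance bound applied to the $(a+b)$-strongly convex conditional potential. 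This buys a self-contained argument (modulo Brascamp--Lieb, which is itself a substantial input, though standard) and has the side benefit of exhibiting the same Hessian-of-the-flow structure that the paper uses elsewhere (compare Lemma~\ref{Lem:HesOU}, which is exactly your identity specialized to the OU kernel); the paper's citation route is shorter and avoids any differentiation-under-the-integral regularity caveats. Both are valid; your computation of the constants, the collapse of the Gibbs identity for a quadratic $g$, and the final algebra $b - b^2/(a+b) = ab/(a+b)$ all check out.
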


We say $X \sim \rho$ is $D$-bounded for some $D \ge 0$ if the support of $\rho$ is contained in a ball of diameter $D$.
Then we also have the following.

\begin{lemma}\label{Lem:Bdd}
Along the OU flow for $\nu = \N(0,\frac{1}{\alpha}I)$,
if $X_0 \sim \rho_0$ is $D$-bounded, then
$X_t \sim \rho_t$ is $\frac{\alpha(1-(1+D^2\alpha)e^{-2\alpha t})}{(1-e^{-2\alpha t})^2}$-strongly log-concave for $t \ge \frac{1}{2\alpha} \log (1+D^2 \alpha)$.
\end{lemma}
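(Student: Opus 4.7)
The plan is to express the Hessian of $\log \rho_t$ directly in terms of the posterior distribution of $X_0$ given $X_t$, and then exploit the fact that this posterior is supported in $\operatorname{supp}(\rho_0)$ to bound the covariance term by $D^2 I$.

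Write $a = e^{-\alpha t}$ and $\sigma^2 = \tau_\alpha(t) = (1 - e^{-2\alpha t})/\alpha$. The integral representation \eqref{Eq:OUCESoln} displays $\rho_t$ as the $y$-marginal of a Gibbs kernel whose posterior $\rho_{X_0 \mid X_t}(x \mid y) \propto \rho_0(x)\, e^{-\|y - ax\|^2/(2\sigma^2)}$ is manifestly supported in $\operatorname{supp}(\rho_0)$. Differentiating $\log \rho_t$ once yields
\begin{equation*}
\nabla \log \rho_t(y) = \frac{1}{\sigma^2}\bigl(a\,\E[X_0 \mid X_t = y] - y\bigr),
\end{equation*}
and a one-line exponential-family argument (using $\nabla_y \log \rho_{X_0 \mid X_t}(x \mid y) = (a/\sigma^2)(x - \E[X_0 \mid X_t = y])$) gives $\nabla_y \E[X_0 \mid X_t = y] = (a/\sigma^2)\,\Cov(X_0 \mid X_t = y)$. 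Combining,
\begin{equation*}
-\nabla^2 \log \rho_t(y) = \frac{1}{\sigma^2}\, I - \frac{a^2}{\sigma^4}\,\Cov(X_0 \mid X_t = y).
\end{equation*}

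Next I bound the conditional covariance: since the posterior is supported in a set of diameter $D$, the random variable $X_0 - \E[X_0 \mid X_t = y]$ is almost surely bounded in norm by $D$ under the posterior, hence $\Cov(X_0 \mid X_t = y) \preceq D^2 I$. Substituting this together with $\sigma^2 = (1-e^{-2\alpha t})/\alpha$ and $a^2 = e^{-2\alpha t}$ and simplifying produces the stated lower bound
\begin{equation*}
-\nabla^2 \log \rho_t(y) \succeq \frac{\alpha\bigl(1 - (1+D^2\alpha)e^{-2\alpha t}\bigr)}{(1-e^{-2\alpha t})^2}\, I.
\end{equation*}
This coefficient is nonnegative precisely when $(1+D^2\alpha)e^{-2\alpha t} \le 1$, equivalently $t \ge (2\alpha)^{-1}\log(1+D^2\alpha)$.

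The only nontrivial ingredient is the exponential-family derivative identity for $\nabla_y \E[X_0 \mid X_t = y]$, whose derivation requires a short calculation differentiating under the integral sign (justified here because $\rho_0$ has compact support). Everything else is algebraic bookkeeping of the factors of $a$ and $\sigma^2$, so I do not foresee any genuine obstacle.
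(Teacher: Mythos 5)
Your proposal is correct and follows essentially the same route as the paper: the paper's Lemma~\ref{Lem:HesOU} is exactly your Hessian identity $-\nabla^2\log\rho_t(y) = \sigma^{-2}I - a^2\sigma^{-4}\Cov(X_0\mid X_t=y)$ (derived there via the Bayes factorization and the exponential-family form of the posterior, rather than via the score/posterior-mean formula, but these are the same computation reorganized), and the concluding step --- bounding the posterior covariance by $D^2 I$ from the $D$-boundedness of the conditional support and reading off when the coefficient is nonnegative --- is identical.
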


The threshold $t \ge \frac{1}{2\alpha} \log (1+D^2 \alpha)$ in Lemma~\ref{Lem:Bdd} is so that the log-concavity constant $\frac{\alpha(1-(1+D^2\alpha)e^{-2\alpha t})}{(1-e^{-2\alpha t})^2}$ is nonnegative.
It is possible to combine Lemmas~\ref{Lem:SLC} and~\ref{Lem:Bdd} to handle the case when the initial distribution is a convolution of a strongly log-concave and a bounded distribution (for example, a mixture of Gaussians), but with a more complicated threshold. For simplicity, we omit it here.

\section{Convexity of mutual information}

We present our main results on the convexity of mutual information along the OU flow.
Throughout, let $X_t \sim \rho_t$ denote the OU flow from $X_0 \sim \rho_0$.

\subsection{Eventual convexity when initial distribution is strongly log-concave}

By combining Lemmas~\ref{Lem:SuffConv} and~\ref{Lem:SLC}, we establish the following result for strongly log-concave distributions: 

\begin{theorem}\label{Thm:Conv}
Suppose $X_0 \sim \rho_0$ is $\lambda$-strongly log-concave for some $\lambda > 0$.
Along the OU flow for $\nu = \N(0,\frac{1}{\alpha}I)$, mutual information $t \mapsto I(X_0;X_t)$ is convex for all $t \ge t(\lambda)$, where
$$t(\lambda) =
\begin{cases}
0 ~~ & \text{ if } \lambda \ge \frac{\alpha}{2}, \\
\frac{1}{2\alpha} \log (\frac{\alpha}{\lambda}-1) ~ & \text{ if } \lambda < \frac{\alpha}{2}.
\end{cases}$$
\end{theorem}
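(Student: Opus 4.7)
The plan is to combine Lemmas~\ref{Lem:SuffConv} and~\ref{Lem:SLC} directly. By Lemma~\ref{Lem:SuffConv}, mutual information is convex at time $t$ whenever $X_t$ is $\frac{\alpha}{2}$-strongly log-concave, and by Lemma~\ref{Lem:SLC} the log-concavity constant of $X_t$, starting from a $\lambda$-strongly log-concave $X_0$, is
$$\lambda_t = \left( \frac{e^{-2\alpha t}}{\lambda} + \frac{1-e^{-2\alpha t}}{\alpha} \right)^{-1}.$$
So the theorem reduces to determining the set of $t \ge 0$ for which $\lambda_t \ge \frac{\alpha}{2}$.

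Taking reciprocals, this inequality is equivalent to
$$\frac{e^{-2\alpha t}}{\lambda} + \frac{1-e^{-2\alpha t}}{\alpha} \le \frac{2}{\alpha},$$
which after clearing and simplifying becomes $e^{-2\alpha t}(\alpha - \lambda) \le \lambda$. I would then split on the sign of $\alpha - \lambda$. If $\alpha \le \lambda$, the left-hand side is nonpositive and the inequality is automatic for every $t \ge 0$. If $\alpha > \lambda$, I divide to get $e^{-2\alpha t} \le \frac{\lambda}{\alpha-\lambda}$; when $\lambda \ge \frac{\alpha}{2}$ the right-hand side is at least $1$, so the inequality again holds for all $t \ge 0$, giving $t(\lambda)=0$. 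When $\lambda < \frac{\alpha}{2}$, the right-hand side lies in $(0,1)$, and taking logarithms yields the threshold $t \ge \frac{1}{2\alpha}\log\bigl(\frac{\alpha}{\lambda}-1\bigr)$, exactly as in the statement.

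A small but worthwhile bookkeeping step is to record that $\lambda_t$ is non-decreasing in $t$ whenever $\lambda \le \alpha$: indeed
$$\lambda_t^{-1} = \frac{1}{\alpha} + e^{-2\alpha t}\left(\frac{1}{\lambda}-\frac{1}{\alpha}\right)$$
is monotonically decreasing in $t$ in this regime, with limit $\frac{1}{\alpha}$ as $t \to \infty$. Thus once $\lambda_t$ crosses the threshold $\frac{\alpha}{2}$ it stays above it, so Lemma~\ref{Lem:SuffConv} applies uniformly on $[t(\lambda),\infty)$ and not only at isolated times. There is no real technical obstacle here; the proof is essentially a short algebraic check once the two auxiliary lemmas are in hand, and the only mild care is in the case analysis $\lambda \gtrless \alpha$ and $\lambda \gtrless \alpha/2$, which collapses to the two cases listed in the theorem.
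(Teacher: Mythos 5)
Your proposal is correct and follows essentially the same route as the paper: reduce via Lemmas~\ref{Lem:SuffConv} and~\ref{Lem:SLC} to the algebraic condition $\lambda_t \ge \frac{\alpha}{2}$ and solve for $t$, with only cosmetic differences in how the case analysis is organized. The added monotonicity remark about $\lambda_t$ is a harmless (and sensible) piece of bookkeeping, though it is not strictly needed since your chain of equivalences already identifies the full set of admissible $t$ as $[t(\lambda),\infty)$.
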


Theorem~\ref{Thm:Conv} above proves the eventual convexity of mutual information for any strongly log-concave initial distribution.
However, the threshold is not tight.
For example, if $X_0 \sim \N(\mu,\Sigma)$, then $I(X_0;X_t) = \frac{1}{2} \sum_{i=1}^n \log (1+\frac{s_i\alpha e^{-2\alpha t}}{1-e^{-2\alpha t}})$ where $s_1 \ge \cdots \ge s_n > 0$ are the eigenvalues of $\Sigma$.
Then $X_0$ is $\frac{1}{s_1}$-strongly log-concave, but one can verify that in this case mutual information $I(X_0;X_t)$ is always convex for all $t \ge 0$, for any $s_1,\dots,s_n > 0$.
Ultimately this gap is due to the fact that the sufficient condition in Lemma~\ref{Lem:SuffConv} is not necessary, and it would be interesting to see how to tighten it.

\subsection{Eventual convexity when initial distribution is bounded}

By combining Lemmas~\ref{Lem:SuffConv} and~\ref{Lem:Bdd}, we establish the following result for distributions with bounded support.

\begin{theorem}\label{Thm:ConvBdd}
Suppose $X_0 \sim \rho_0$ is $D$-bounded for some $D \ge 0$.
Along the OU flow for $\nu = \N(0,\frac{1}{\alpha}I)$, mutual information $t \mapsto I(X_0;X_t)$ is convex for all $t \ge \frac{1}{2\alpha} \log (\sqrt{1+D^4\alpha^2}+D^2\alpha)$.
\end{theorem}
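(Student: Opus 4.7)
The plan is to chain Lemma~\ref{Lem:Bdd} with Lemma~\ref{Lem:SuffConv}: Lemma~\ref{Lem:Bdd} gives a lower bound on the log-concavity constant $\lambda(t)$ of $\rho_t$ when $X_0$ is $D$-bounded, and Lemma~\ref{Lem:SuffConv} asserts convexity of $I(X_0;X_t)$ at any time $t$ where $\lambda(t) \ge \alpha/2$. So the whole task reduces to solving the inequality
\[
\frac{\alpha\bigl(1-(1+D^2\alpha)e^{-2\alpha t}\bigr)}{(1-e^{-2\alpha t})^2} \;\ge\; \frac{\alpha}{2}
\]
for $t$, subject to the constraint $t \ge \frac{1}{2\alpha}\log(1+D^2\alpha)$ under which Lemma~\ref{Lem:Bdd} applies.

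To solve the inequality, I substitute $u = e^{-2\alpha t} \in (0,1)$ and $c = 1+D^2\alpha$, which turns it into $2(1-cu) \ge (1-u)^2$, i.e.
\[
u^2 + 2D^2\alpha\, u - 1 \;\le\; 0.
\]
The positive root of the corresponding quadratic is $u_\star = \sqrt{1+D^4\alpha^2} - D^2\alpha$, so the inequality holds precisely when $u \le u_\star$, equivalently $e^{-2\alpha t} \le \sqrt{1+D^4\alpha^2} - D^2\alpha$. Rationalizing the reciprocal gives
\[
t \;\ge\; \frac{1}{2\alpha}\log\!\bigl(\sqrt{1+D^4\alpha^2}+D^2\alpha\bigr),
\]
which is the threshold in the theorem statement.

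Finally, I need to check that this threshold is at least the one required by Lemma~\ref{Lem:Bdd}, namely $\frac{1}{2\alpha}\log(1+D^2\alpha)$. This amounts to verifying $\sqrt{1+D^4\alpha^2} + D^2\alpha \ge 1 + D^2\alpha$, i.e. $\sqrt{1+D^4\alpha^2} \ge 1$, which is trivial. Hence for all $t$ above the stated threshold, $X_t$ is at least $\frac{\alpha}{2}$-strongly log-concave, and Lemma~\ref{Lem:SuffConv} delivers convexity of $I(X_0;X_t)$.

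I don't expect a substantive obstacle here: once Lemmas~\ref{Lem:SuffConv} and~\ref{Lem:Bdd} are in hand, the argument is purely algebraic. The only minor care needed is confirming that $u_\star$ lies in $(0,1)$ (so the threshold time is finite and positive) and that the domain constraint from Lemma~\ref{Lem:Bdd} is automatically absorbed by the stronger threshold from Lemma~\ref{Lem:SuffConv}, both of which fall out immediately from the monotonicity $\sqrt{1+D^4\alpha^2} \ge 1$.
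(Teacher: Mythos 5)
Your proposal is correct and follows essentially the same route as the paper: combine Lemma~\ref{Lem:SuffConv} with Lemma~\ref{Lem:Bdd}, substitute $w=e^{-2\alpha t}$, reduce to the quadratic $w^2+2D^2\alpha w-1\le 0$, and rationalize to get the stated threshold. Your extra check that the new threshold dominates the Lemma~\ref{Lem:Bdd} threshold (via $\sqrt{1+D^4\alpha^2}\ge 1$) is a small but worthwhile addition that the paper handles only implicitly.
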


Note that as $\alpha \to 0$, the threshold in Theorem~\ref{Thm:ConvBdd} above becomes $t \ge \frac{D^2}{2}$, thus recovering the corresponding result for the heat flow from~\cite[Theorem~2]{WJ18}.
As noted previously, we can combine Theorems~\ref{Thm:Conv} and~\ref{Thm:ConvBdd} to show the eventual convexity of mutual information when the initial distribution is a convolution of a strongly log-concave and a bounded distribution, but for simplicity we omit it here.

\subsection{Eventual convexity when Fisher information is finite}

We now investigate the convexity of mutual information in general, regardless of the log-concavity of the distributions.
We can show that if the initial distribution has finite fourth moment and Fisher information, then mutual information is eventually convex along the OU flow.

For $p \ge 0$, let $M_p(X) = \E[\|X-\mu\|^p]$ denote the $p$-th moment of a random variable $X$ with mean $\E[X] = \mu$.
Let $J(X) = \int \rho(x) \|\nabla \log \rho(x)\|^2 dx$ denote the (absolute) Fisher information of $X \sim \rho$.
Then we have the following.

\begin{theorem}\label{Thm:EventConvFI}
Suppose $X_0 \sim \rho_0$ has $M \equiv M_4(X_0) < \infty$ and $J \equiv J(X_0) < \infty$.
Along the OU flow for $\nu = \N(0,\frac{1}{\alpha}I)$, mutual information $t \mapsto I(X_0;X_t)$ is convex for all $t \ge \frac{1}{2\alpha} \log(1 + \frac{\alpha MJ}{n^2} + \frac{1}{3}(1-\frac{\alpha MJ}{n^2}+\frac{\alpha^2 M}{n})^2)$.
\end{theorem}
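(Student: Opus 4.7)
The plan is to reduce to strong log-concavity of $\rho_t$ via Lemma~\ref{Lem:SuffConv}: I need to show that $\rho_t$ is $\frac{\alpha}{2}$-strongly log-concave once $t$ exceeds the stated threshold. Since neither Lemma~\ref{Lem:SLC} nor Lemma~\ref{Lem:Bdd} applies (we assume neither log-concavity nor bounded support of $\rho_0$), I will verify SLC of $\rho_t$ directly from $M$ and $J$.

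First I would convert the SLC condition into a pointwise bound on the conditional covariance. From $X_t \stackrel{d}{=} e^{-\alpha t} X_0 + \sqrt{\tau_\alpha(t)}\,Z$ and Tweedie's formula, a direct computation gives
\begin{equation*}
-\nabla^2 \log \rho_t(y) = \frac{1}{\tau_\alpha(t)} I - \frac{e^{-2\alpha t}}{\tau_\alpha(t)^2}\,\Cov(X_0 \,|\, X_t=y),
\end{equation*}
so $\rho_t$ is $\frac{\alpha}{2}$-SLC if and only if $\|\Cov(X_0 \,|\, X_t=y)\|_{\mathrm{op}} \le \sinh(2\alpha t)/\alpha$ for every $y$. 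The problem therefore reduces to bounding this conditional covariance uniformly in $y$ using only $M$, $J$, $\alpha$, and $n$.

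Next I would derive such a bound via an integration-by-parts identity. Rescaling to $W := e^{\alpha t}X_t = X_0 + \tilde\sigma Z$ with $\tilde\sigma^2 = (e^{2\alpha t}-1)/\alpha$, Stein's identity applied to the tilted conditional density $p(x \,|\, w) \propto \rho_0(x)\,e^{-\|w-x\|^2/(2\tilde\sigma^2)}$ yields
\begin{equation*}
\Cov(X_0 \,|\, W=w) = \tilde\sigma^2 I + \tilde\sigma^2\,\Cov\bigl(\nabla \log \rho_0(X_0),\,X_0 \,\big|\, W=w\bigr).
\end{equation*}
Taking traces and applying Cauchy--Schwarz to the cross-covariance gives a quadratic inequality in $P := \Tr \Cov(X_0 \,|\, W=w)$ whose coefficients involve $Q := \Tr \Cov(\nabla \log \rho_0(X_0) \,|\, W=w)$. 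Using the fourth-moment hypothesis $M$ to control $P$ pointwise and the Fisher-information hypothesis $J$ to control $Q$, solving the quadratic in $P$ should produce an estimate of the form
\begin{equation*}
\|\Cov(X_0 \,|\, W=w)\|_{\mathrm{op}} \le \frac{MJ}{n^2} + \frac{1}{3\alpha}\left(1 - \frac{\alpha MJ}{n^2} + \frac{\alpha^2 M}{n}\right)^{2} =: \tilde\sigma_0^2.
\end{equation*}
Since $\sinh(2\alpha t)/\alpha = \tilde\sigma^2(1+e^{-2\alpha t})/2$, this bound certifies $\frac{\alpha}{2}$-SLC of $\rho_t$ as soon as $\tilde\sigma^2 \ge \tilde\sigma_0^2$, which is exactly $t \ge t_0$.

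The main obstacle is the passage from the averaged bounds $\E_W[P] \le \E[\|X_0\|^2]$ and $\E_W[Q] \le J$, which follow immediately from $M$ and $J$, to the \emph{pointwise} bounds on $P$ and $Q$ needed for uniform SLC of $\rho_t$; the Gaussian tilt $e^{-\|w-x\|^2/(2\tilde\sigma^2)} \le 1$ in the conditional density does not by itself propagate the global integrability of $\rho_0$ into a uniform bound on the tilted second moment or tilted Fisher information. The crux of the argument will be to combine the Stein identity above with the companion score identity $\nabla \log \rho_W(w) = \E[\nabla \log \rho_0(X_0) \,|\, W=w]$, which couples the score of the convolved density to that of $\rho_0$, so that the global quantities $M$ and $J$ yield the uniform control on $P$ and $Q$ that produces the stated threshold.
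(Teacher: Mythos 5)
Your reduction to Lemma~\ref{Lem:SuffConv} is where the argument breaks. To certify $\frac{\alpha}{2}$-strong log-concavity of $\rho_t$ you need, as you correctly note via Lemma~\ref{Lem:HesOU}, a bound on $\Cov(X_0\,|\,X_t=y)$ that holds \emph{uniformly in $y$}. The hypotheses $M_4(X_0)<\infty$ and $J(X_0)<\infty$ only give you averaged control: $\E_{X_t}[\Tr\Cov(X_0|X_t)]\le\sqrt{M}$ and, by the law of total variance, $\E_{X_t}[\Tr\Cov(\nabla\log\rho_0(X_0)\,|\,X_t)]\le J$. For an atypical $y$ (say, sitting between two far-separated modes of $\rho_0$) the tilted conditional density concentrates on a region where neither the second moment nor the score of $\rho_0$ is controlled by these global averages, and your own write-up concedes exactly this: the ``crux of the argument'' that would convert averaged into pointwise control is announced but never carried out. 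That step is the entire difficulty of your route, and there is no reason to expect the specific threshold in the theorem to emerge from it --- the $\frac{1}{3}(1-\frac{\alpha MJ}{n^2}+\frac{\alpha^2 M}{n})^2$ term is not a natural pointwise covariance bound.

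The paper avoids pointwise control altogether. Convexity at time $t$ only requires $K_\nu(X_0;X_t)+\alpha J_\nu(X_0;X_t)\ge 0$, and both terms are themselves expectations over $X_t$: Lemma~\ref{Lem:mmse} and Lemma~\ref{Lem:KnuOU} express them through $\Var(X_0|X_t)$ and $\E[\|\Cov(\rho_{0|t}(\cdot|X_t))\|_{\HS}^2]$. The latter is bounded above by $M_4(X_0)$ (Lemma~\ref{Lem:M4}), and the former is bounded \emph{below} by $n^2/(J+\frac{n\alpha e^{-2\alpha t}}{1-e^{-2\alpha t}})$ via the variance--Fisher-information inequality applied conditionally plus $\E[1/J]\ge 1/\E[J]$ (Lemmas~\ref{Lem:JX0XtOU} and~\ref{Lem:VarOU}). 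Substituting $w=e^{-2\alpha t}$ reduces convexity to the nonnegativity of a cubic $p(w)=w^3+aw^2+bw+1$, and the stated threshold is precisely the tangent-line estimate $w_0\ge 1/(\frac{a^2}{3}-b)$ of Lemma~\ref{Lem:Cubic} for its root in $(0,1)$. If you want to salvage your approach, you would either have to add hypotheses strong enough to give uniform conditional-covariance bounds (as in the bounded-support case, Lemma~\ref{Lem:Bdd}), or abandon the SLC detour and bound the integrated second derivative directly as the paper does.
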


Theorem~\ref{Thm:EventConvFI} above proves the eventual convexity of mutual information under rather general conditions.
However, in the limit $\alpha \to 0$ the time threshold becomes $+\infty$, rather than recovering the corresponding result for the heat flow from~\cite[Theorem~3]{WJ18}.
This is because in the proof we use a simple bound to estimate the root of a cubic polynomial (see Appendix~\ref{App:EventConvFI}), and it would be interesting to refine the analysis to obtain a better estimate.

\section{Nonconvexity of mutual information}

We now provide counterexamples to show that mutual information can be concave for small time along the OU flow.
Throughout, for $t > 0$, let $\tau_\alpha(t) = \frac{1}{\alpha}(1-e^{-2\alpha t})$.
For $u > 0$, let $V_u \in \R$ denote the one-dimensional Gaussian random variable $V_u \sim \N(u,u)$ with mean and variance equal to $u$.

\subsection{Mixture of two point masses}
\label{Sec:Pt2}

Let $X_0 \sim \frac{1}{2} \delta_{-\mu} + \frac{1}{2} \delta_\mu$ be a uniform mixture of two point masses centered at $-\mu$ and $\mu$, for some $\mu \in \R^n$, $\mu \neq 0$.
Along the OU flow for $\nu = \N(0,\frac{1}{\alpha}I)$, $X_t \sim \frac{1}{2} \N(-e^{-\alpha t} \mu, \tau_\alpha(t) I) + \frac{1}{2} \N(e^{-\alpha t} \mu, \tau_\alpha(t) I)$ is a uniform mixture of two Gaussians.

By direct calculation, the mutual information is:
$$I(X_0;X_t) = \frac{\alpha \|\mu\|^2}{e^{2\alpha t}-1} - \E\left[ \log \cosh \left(V_{\frac{\alpha \|\mu\|^2}{e^{2\alpha t}-1}}\right)\right].$$
The behavior is depicted in Figure~\ref{Fig:Pt} for $\alpha = \frac{1}{2}$ in $\R$.
We see that mutual information is not convex at small time.
It starts at the value $\log 2$, which follows from the general result in Theorem~\ref{Thm:GenMixt}, then stays flat for a while before decreasing and becoming convex.

\subsection{Mixture of two Gaussians}
\label{Sec:Gau2}

Let $X_0 \sim \frac{1}{2} \N(-\mu,sI) + \frac{1}{2} \N(\mu,sI)$ be a uniform mixture of two Gaussians with the same covariance $sI$ for some $s > 0$, centered at $-\mu$ and $\mu$ for some $\mu \in \R^n$, $\mu \neq 0$.
Along the OU flow for $\nu = \N(0,\frac{1}{\alpha}I)$, $X_t \sim \frac{1}{2} \N(-e^{-\alpha t} \mu, (e^{-2\alpha t}s +\tau_\alpha(t)) I) + \frac{1}{2} \N(e^{-\alpha t} \mu, (e^{-2\alpha t} s + \tau_\alpha(t)) I)$ is also a uniform mixture of two Gaussians.

By direct calculation, the mutual information is:
{\small
\begin{multline*}
I(X_0;X_t) = \frac{n}{2}\log\left(1+\frac{\alpha s}{e^{2\alpha t}-1}\right) \\
+ \frac{\alpha \|\mu\|^2}{\alpha s + e^{2\alpha t}-1} - \E\left[ \log \cosh \left(V_{\frac{\alpha \|\mu\|^2}{\alpha s + e^{2\alpha t}-1}}\right)\right].
\end{multline*}
}
The behavior is depicted in Figure~\ref{Fig:Gau} for $\alpha = \frac{1}{2}$ in $\R$ ($n=1$).
We see that now mutual information starts at $+\infty$, but it decreases quickly and flattens out for a while before decreasing again.
Therefore, mutual information is still not convex at some small time.

\begin{figure}[t!h!]
    \centering
    \begin{subfigure}[b]{0.2311\textwidth}
        \includegraphics[width=\textwidth]{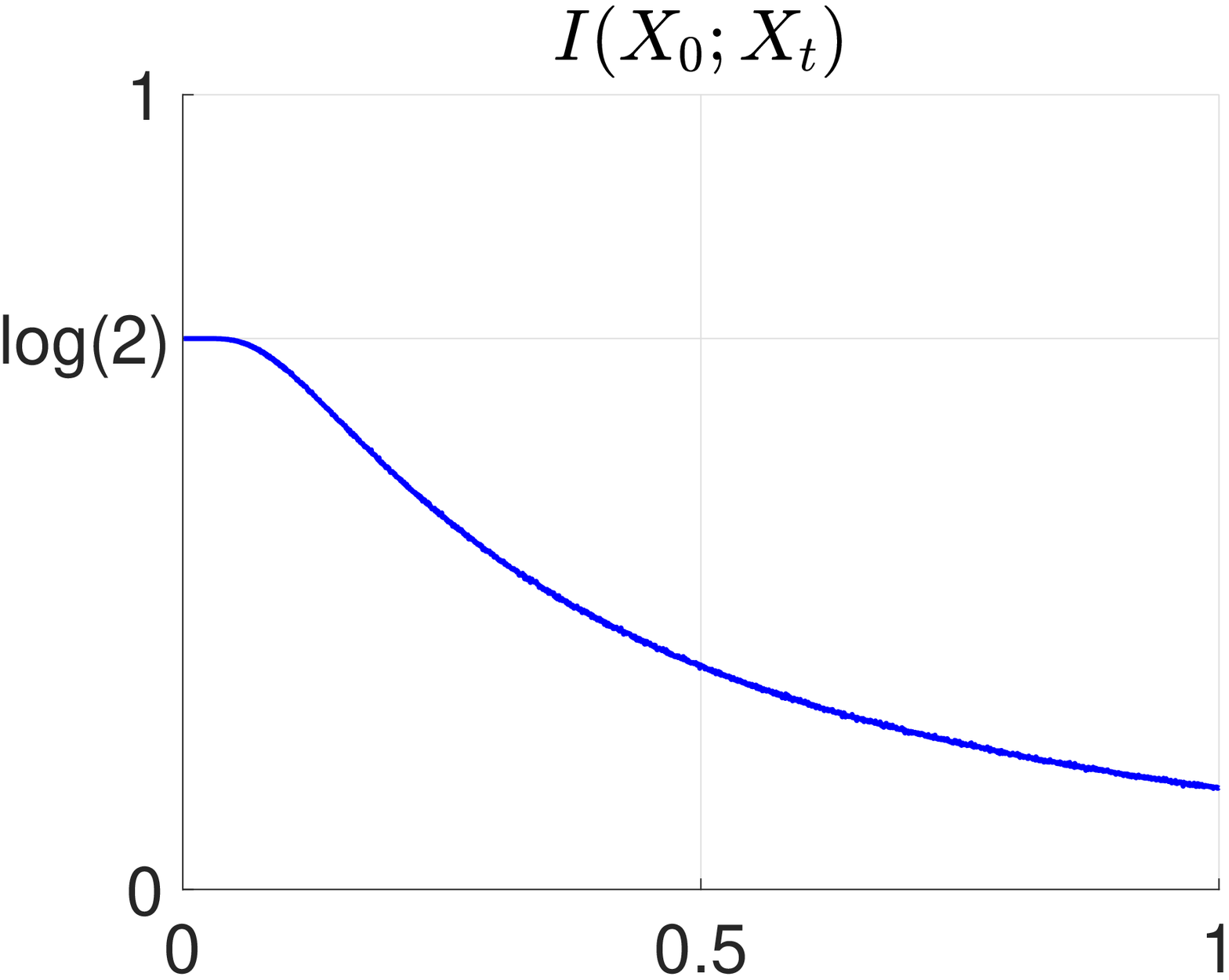}
        \caption{Mixture of point masses}
        \label{Fig:Pt}
    \end{subfigure}
      \; 
    \begin{subfigure}[b]{0.2311\textwidth}
        \includegraphics[width=\textwidth]{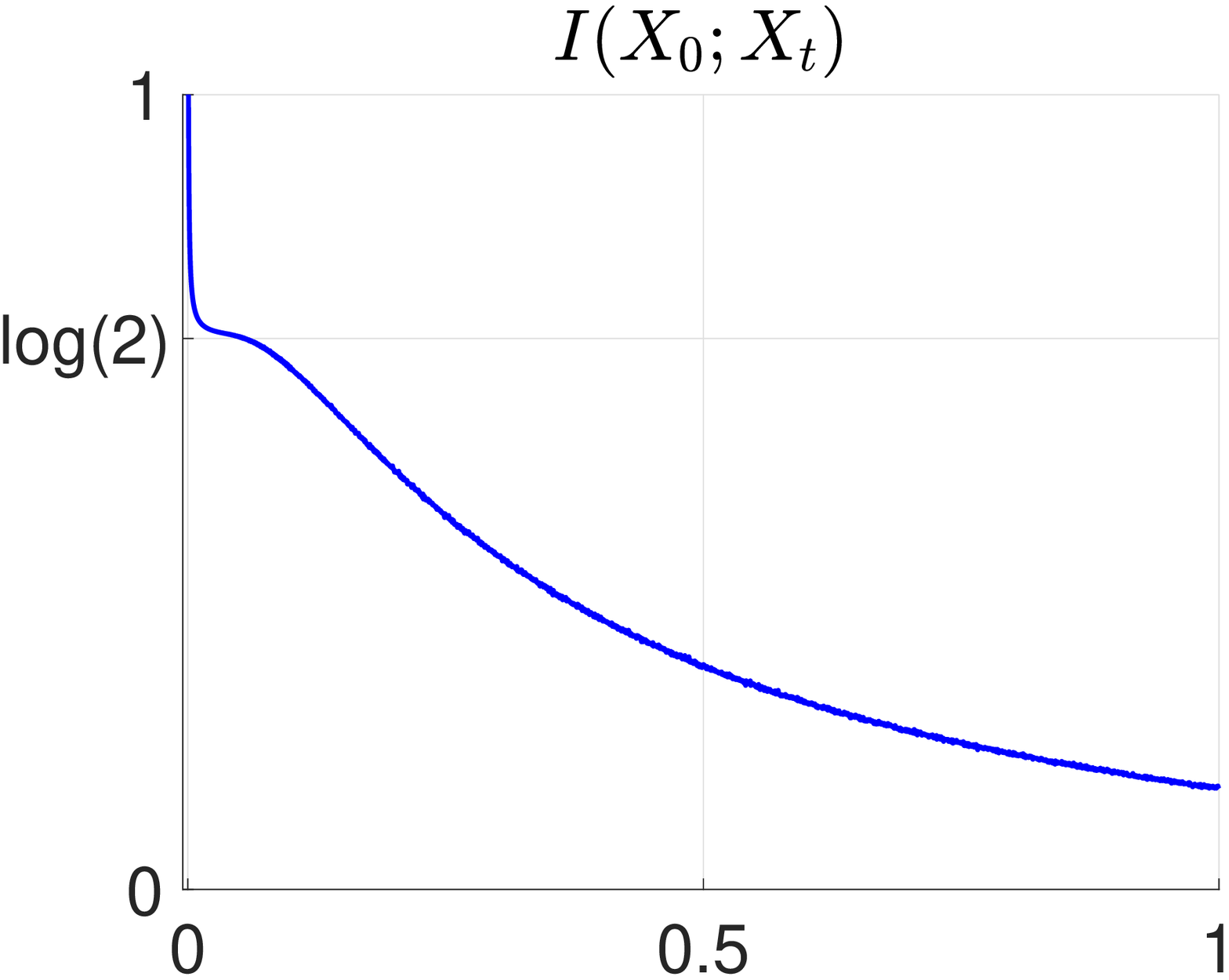}
        \caption{Mixture of Gaussians}
        \label{Fig:Gau}
    \end{subfigure} 
    
    \caption{Mutual information along the OU flow for $\nu = \N(0,2)$.
    (a) Left: $X_0 \sim \frac{1}{2} \delta_{-1} + \frac{1}{2} \delta_1$.
    (b) Right: $X_0 \sim \frac{1}{2} \N(-1,s) + \frac{1}{2} \N(1,s)$
     with $s = 10^{-3}$.}
    \label{Fig:MI}
\end{figure}

\subsection{General mixture of point masses}

Let $X_0 \sim \sum_{i=1}^k p_i \delta_{\mu_i}$ be a mixture of point masses centered at distinct $\mu_i \in \R^n$, with mixture probabilities $p_i > 0$, $\sum_{i=1}^k p_i = 1$.
Along the OU flow for $\nu = \N(0,\frac{1}{\alpha}I)$, $X_t \sim \sum_{i=1}^k p_i \N(e^{-\alpha t}\mu_i, \tau_\alpha(t) I)$ is a mixture of Gaussians.

By adapting the estimates from~\cite[$\S$IV-C]{WJ18}, we can show that mutual information along the OU flow starts at a finite value which is equal to the discrete entropy of the mixture probability, and it is exponentially concentrated at small time.
This is the same phenomenon as in the heat flow case, except that the bound on the small time now depends on $\alpha$.

Let $\|p\|_\infty = \max_{i,j} p_i/p_j$ and $m = \min_{i \neq j} \|\mu_i-\mu_j\| > 0$.
Let $h(p) = -\sum_{i=1}^k p_i \log p_i$ denote the discrete entropy.

\begin{theorem}\label{Thm:GenMixt}
Along the OU flow for $\nu = \N(0,\frac{1}{\alpha}I)$, 
for all $0 < t \le \frac{1}{2\alpha} \log(1+ \frac{\alpha^2 m^2}{676\|p\|_\infty^2})$,
$$0 \,\le\, h(p) - I(X_0;X_t) \,\le\, 3(k-1)\|p\|_\infty \exp\left(- \, \frac{0.085\,\alpha m^2}{e^{2\alpha t}-1}\right).$$
\end{theorem}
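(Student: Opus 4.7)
The plan is to exploit that $X_0$ is discrete so $I(X_0;X_t) = h(p) - H(X_0\,|\,X_t)$, which makes the lower bound $h(p) - I(X_0;X_t) = H(X_0\,|\,X_t) \geq 0$ immediate. The real work is to bound $H(X_0\,|\,X_t)$ from above by showing that the Bayes posterior of $X_0$ given $X_t$ is nearly a point mass at small time. Intuitively, the components $\N(e^{-\alpha t}\mu_i, \tau_\alpha(t) I)$ of $\rho_t$ are separated by at least $e^{-\alpha t} m$ while each has standard deviation only $\sqrt{\tau_\alpha(t)}$, so an optimal decoder of $X_0$ from $X_t$ succeeds with exponentially high probability.

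To make this quantitative, I would work with the posterior
\[
q_i(y) = \frac{p_i \exp(-\|y - e^{-\alpha t}\mu_i\|^2/(2\tau_\alpha(t)))}{\sum_\ell p_\ell \exp(-\|y - e^{-\alpha t}\mu_\ell\|^2/(2\tau_\alpha(t)))},
\]
conditioning on $X_0 = \mu_j$ and using the representation $X_t = e^{-\alpha t}\mu_j + \sqrt{\tau_\alpha(t)}\,Z$ with $Z \sim \N(0,I)$. A direct computation gives
\[
\log\frac{q_i(X_t)}{q_j(X_t)} = \log\frac{p_i}{p_j} - \frac{\alpha\|\mu_i-\mu_j\|^2}{2(e^{2\alpha t}-1)} - \frac{e^{-\alpha t}\langle \mu_j-\mu_i, Z\rangle}{\sqrt{\tau_\alpha(t)}},
\]
in which the signal term is at least $\alpha m^2/(2(e^{2\alpha t}-1))$ and the noise is a centered Gaussian whose projection onto each unit vector $(\mu_j - \mu_i)/\|\mu_j-\mu_i\|$ is standard normal.

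The crucial step is a split into a good event and its complement. For a parameter $\eta \in (0,1)$, set $A_j = \{\max_{i \neq j}|\langle (\mu_j-\mu_i)/\|\mu_j-\mu_i\|, Z\rangle| \leq \eta e^{-\alpha t} m/(2\sqrt{\tau_\alpha(t)})\}$. On $A_j$ the noise term is at most an $\eta$-fraction of the signal, so $q_i(X_t)/q_j(X_t) \leq \|p\|_\infty e^{-\gamma}$ with $\gamma := (1-\eta)\alpha m^2/(2(e^{2\alpha t}-1))$; hence $s := \sum_{i \neq j}q_i(X_t) \leq (k-1)\|p\|_\infty e^{-\gamma}$, and the elementary estimate $-\sum_i q_i \log q_i \leq s(1 + \log((k-1)/s))$ yields a bound of the form $C(k-1)\|p\|_\infty \exp(-c\gamma)$ on $A_j$ for any $c < 1$ and an absolute $C$ depending on $c$. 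Off $A_j$ I would use the crude bound $-\sum_i q_i \log q_i \leq \log k$ together with the union-bound tail estimate $\Pr(A_j^c) \leq 2(k-1)\exp(-\eta^2 \alpha m^2/(8(e^{2\alpha t}-1)))$.

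Summing over $j$ weighted by $p_j$ and optimizing $\eta$ to balance the good-event exponent $(1-\eta)/2$ against the bad-event exponent $\eta^2/8$ yields the common value $(3-2\sqrt{2})/2 \approx 0.0858$ at $\eta = 2\sqrt{2} - 2 \approx 0.828$, matching the stated constant $0.085$. The main obstacle will be precisely this bookkeeping: conceptually the argument is a routine posterior-concentration calculation adapted from \cite[$\S$IV-C]{WJ18} with $t \mapsto \tau_\alpha(t)$ for the noise variance and $\mu_i \mapsto e^{-\alpha t}\mu_i$ for the centers, but extracting the numerical constants $0.085$, $3$, and $676$ requires absorbing both the $(1+\gamma)$ factor from the good-event bound and the $\log k$ factor from the bad-event bound into the pre-exponential constant without degrading the exponent. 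The $t$-threshold $\frac{1}{2\alpha}\log(1 + \alpha^2 m^2/(676\|p\|_\infty^2))$ arises as the region where $\gamma$ is large enough for this absorption to go through with the stated prefactor $3(k-1)\|p\|_\infty$.
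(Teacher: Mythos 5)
Your starting point is sound and is in fact the same identity the paper uses in disguise: since $X_0$ is discrete, $h(p)-I(X_0;X_t)=H(X_0\,|\,X_t)=\E[-\log q_{X_0}(X_t)]$, which is exactly the expression $\sum_i p_i\,\E[\log(1+\sum_{j\ne i}\tfrac{p_j}{p_i}e^{\cdots})]$ that the paper derives via differential entropies; your computation of the log posterior ratio and the identification of its $1$-D Gaussian law are also correct. Where the routes diverge is in how that expectation is bounded. The paper applies $\log(1+\sum_j x_j)\le\sum_j\log(1+x_j)$ to reduce to $k-1$ scalar terms of the form $\E[\log(1+b e^{cZ-c^2/2})]$ and then invokes the ready-made estimate $\E[\log(1+be^{cZ-c^2/2})]\le 3be^{-0.085c^2}$ for $c\ge\max\{1,26/b\}$ (Lemma~\ref{Lem:Log2}, i.e.\ \cite[Lemma~15]{WJ18}); the constants $3$, $0.085$, and $676=26^2$ are inherited verbatim from that lemma and from translating $c_{ij}\ge 26/b_{ij}$ into the time threshold. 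You instead re-derive a version of this estimate at the level of the whole posterior via a good/bad-event split.

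The gap is that your split cannot deliver the stated prefactor $3(k-1)\|p\|_\infty$ at the stated exponent $0.085$. Two losses are unabsorbable. First, on the bad event you pay $\Pr(A_j^c)\log k\le 2(k-1)(\log k)\,e^{-\eta^2 c^2/8}$ with $c^2=\alpha m^2/(e^{2\alpha t}-1)$; since the balanced exponent $\eta^2/8=(3-2\sqrt2)/2\approx 0.0858$ exceeds $0.085$ only by a fixed margin $\approx 8\times10^{-4}$, absorbing $\log k$ into the prefactor would require $\log k\le C e^{0.0008\,c^2}$ uniformly in $k$, which fails for large $k$ at any fixed $t$ in the admissible range. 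Second, on the good event the entropy bound $s(1+\log\frac{k-1}{s})$ produces a factor $(1+\gamma)$ with $\gamma\approx 0.0858\,c^2$, and $(1+\gamma)e^{-\gamma}/e^{-0.085c^2}=(1+0.0858\,c^2)e^{-0.0008\,c^2}$ is of order several tens near the boundary of the admissible time range, far exceeding $3$. So your argument proves a bound of the correct qualitative form $C_k\|p\|_\infty\exp(-c'\alpha m^2/(e^{2\alpha t}-1))$, but not the theorem as stated. The fix is to perform the reduction you already set up one step further: apply $\log(1+\sum_j x_j)\le\sum_j\log(1+x_j)$ first, so that each cross term is a one-dimensional expectation $\E[\log(1+b_{ij}e^{c_{ij}Z_1-c_{ij}^2/2})]$, and then quote \cite[Lemma~15]{WJ18} term by term; the union over $j\ne i$ then contributes only the factor $k-1$ multiplicatively, with no $\log k$ and no $(1+\gamma)$.
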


Theorem~\ref{Thm:GenMixt} above implies that $\lim_{t \to 0} I(X_0;X_t) = h(p).$
Thus, for discrete initial distribution, the initial value of mutual information only depends on the mixture proportions, and does not depend on the locations of the centers as long as they are distinct.
This is the same interesting behavior as in the heat flow case, and shows that we can obtain discontinuities of the mutual information at the origin with respect to the initial distribution, by moving the centers and merging them.

Furthermore, if a function converges exponentially fast, then all its derivatives converge to zero exponentially fast.
In our case for discrete initial distribution, this gives the following.

\begin{corollary}\label{Cor:Der}
For all $\ell \in \mathbb{N}$, $\lim_{t \to 0} \frac{d^\ell}{dt^\ell} I(X_0;X_t) = 0$.
\end{corollary}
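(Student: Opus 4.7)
The plan is to upgrade the exponential convergence of Theorem~\ref{Thm:GenMixt} to a bound on every derivative by combining it with an analyticity argument and Cauchy's integral formula. The starting observation is that for a discrete initial distribution $X_0 \sim \sum_{i=1}^k p_i \delta_{\mu_i}$, the density $\rho_t(y) = \sum_i p_i (2\pi \tau_\alpha(t))^{-n/2} \exp(-\|y - e^{-\alpha t}\mu_i\|^2/(2\tau_\alpha(t)))$ depends analytically on $t$ for $t>0$, as do $H(X_t\,|\,X_0) = \frac{n}{2}\log(2\pi e\,\tau_\alpha(t))$ and $H(X_t) = -\int \rho_t \log \rho_t$. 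Hence $g(t) := h(p) - I(X_0;X_t)$ is real-analytic on $(0,\infty)$ and extends to a holomorphic function $g(z)$ on a complex neighborhood of the positive real axis, say the cone $\{z : \mathrm{Re}(z) > 0,\ |\mathrm{Im}(z)| \le \frac{1}{2}\mathrm{Re}(z)\}$.

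The second step is to verify that the exponential bound of Theorem~\ref{Thm:GenMixt} survives, up to harmless constants, on this complex neighborhood. The estimates in the proof of that theorem control $|\log(\rho_t/\nu)|$ via Gaussian tail bounds around each cluster $e^{-\alpha t}\mu_i$, and those tail bounds depend essentially on $\mathrm{Re}(1/\tau_\alpha(z))$ and on the separations $\|\mu_i - \mu_j\|$. For $z = t + is$ with $|s| \le t/2$, one has $\tau_\alpha(z) = \tau_\alpha(t) + O(t|s|) = \tau_\alpha(t)(1 + O(1))$, so $\mathrm{Re}(1/\tau_\alpha(z)) \ge c_1/\tau_\alpha(t) \ge c_2 \alpha / (e^{2\alpha t}-1)$. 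Running the same argument as in Theorem~\ref{Thm:GenMixt} along the complex perturbation then yields a bound of the form
\begin{equation*}
|g(z)| \le C_1 (k-1)\|p\|_\infty \exp\!\left(-\frac{c_3\, \alpha m^2}{e^{2\alpha \mathrm{Re}(z)} - 1}\right)
\end{equation*}
valid for $z$ in the cone and $\mathrm{Re}(z)$ sufficiently small, with $c_3 > 0$ an absolute constant.

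Third, I would apply Cauchy's integral formula on the disc $D(t, t/2)$ which, for $t$ small, is contained in the cone above. For each $\ell \in \mathbb{N}$,
\begin{equation*}
\left|\frac{d^\ell}{dt^\ell} I(X_0;X_t)\right| = |g^{(\ell)}(t)| \le \frac{\ell!}{(t/2)^\ell} \sup_{|z - t| = t/2} |g(z)|.
\end{equation*}
For $z$ on this circle, $\mathrm{Re}(z) \le 3t/2$, so the supremum is bounded by $C_1 (k-1)\|p\|_\infty \exp(-c_4 / t)$ for some $c_4 > 0$ and all sufficiently small $t$. The polynomial blow-up $\ell!\,(2/t)^\ell$ is dominated by this exponential decay, so the right-hand side tends to $0$ as $t \to 0^+$, which is exactly the claim of the corollary.

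The main obstacle is Step 2: checking that the tail estimates underlying Theorem~\ref{Thm:GenMixt} (which are real-valued arguments about log-densities of Gaussian mixtures) remain valid, with uniform constants, after perturbing $\tau_\alpha(t)$ and $e^{-\alpha t}\mu_i$ into the complex plane. A cleaner alternative would be to express $g(t)$ as an absolutely convergent integral whose integrand is dominated by a real Gaussian envelope independent of the imaginary part of $t$; this would immediately yield the complex bound and bypass any reworking of the Theorem~\ref{Thm:GenMixt} argument.
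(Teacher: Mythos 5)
Your route is genuinely different from the paper's. The paper's proof is elementary: it divides the bound of Theorem~\ref{Thm:GenMixt} by $t^\ell$, observes that $(h(p)-I(X_0;X_t))/t^\ell \to 0$ for every $\ell$, and then argues inductively that all derivatives vanish in the limit. You instead propose to extend $g(t)=h(p)-I(X_0;X_t)$ holomorphically to a cone around the positive real axis, push the exponential bound of Theorem~\ref{Thm:GenMixt} into that cone, and apply Cauchy's integral formula on the discs $D(t,t/2)$. Your Step 3 is correct and standard: an estimate $\sup_{|z-t|=t/2}|g(z)|\le C\exp(-c/\mathrm{Re}(z))$ with $\mathrm{Re}(z)\le 3t/2$ dominates the factor $\ell!\,(2/t)^\ell$, so $g^{(\ell)}(t)\to 0$. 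If Steps 1 and 2 were established, this would arguably be a tighter derivation than the paper's, since superpolynomial decay of a smooth real function does not by itself force its derivatives to decay.

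The gaps are in Steps 1 and 2, and they are not cosmetic. For Step 1, holomorphy of $z\mapsto H(X_z)=-\int\rho_z\log\rho_z\,dy$ is asserted, not proven: for complex $z$ the density $\rho_z(y)=\sum_i p_i(2\pi\tau_\alpha(z))^{-n/2}\exp(-\|y-e^{-\alpha z}\mu_i\|^2/(2\tau_\alpha(z)))$ is a sum of complex Gaussians whose relative phases vary with $y$ (the cross terms $\langle y,\mu_i-\mu_j\rangle/\tau_\alpha(z)$ acquire imaginary parts growing linearly in $y$), so for $k\ge 2$ and $\mathrm{Im}(z)\neq 0$ the function $\rho_z(y)$ vanishes on a nonempty set of $y$; there $\log\rho_z(y)$ has branch singularities and the integrand is not single-valued, so analyticity of the entropy integral requires a genuine argument (a branch choice away from the zero set plus Morera with a dominating envelope). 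For Step 2, the uniform bound $|g(z)|\le C_1(k-1)\|p\|_\infty\exp(-c_3\alpha m^2/(e^{2\alpha\mathrm{Re}(z)}-1))$ on the cone is the entire content of the proof, and you acknowledge it is unverified; it cannot simply be read off from Theorem~\ref{Thm:GenMixt}, whose proof (via Lemma~\ref{Lem:Log2} applied to $\E[\log(1+be^{cZ-c^2/2})]$) relies throughout on the positivity of the arguments of the logarithms. Until both are supplied, the Cauchy estimate has nothing to act on, so what you have is a plausible program rather than a proof.
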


In particular, the mutual relative Fisher information $J_\nu(X_0;X_t) = -\frac{d}{dt} I(X_0;X_t)$ also starts at $0$.
Since the initial distribution which is a mixture of point masses is bounded, by Theorem~\ref{Thm:ConvBdd} we know mutual information is eventually convex, which means $J_\nu(X_0;X_t)$ is eventually decreasing.
Since $J_\nu(X_0;X_t)$ is always nonnegative, it must initially increase first, before it can decrease.
During this period in which $J_\nu(X_0;X_t)$ is increasing, mutual information is concave.
This is similar to the behavior for the mixture of two point masses as observed in~$\S$\ref{Sec:Pt2}.
Moreover, by the continuity of the mutual relative first and second-order Fisher information with respect to the initial distribution, this suggests that mutual information can also be concave at some small time when the initial distribution is a mixture of Gaussians, similar to the observation in~$\S$\ref{Sec:Gau2}.

\section{Discussion and future work}

In this paper we have studied the convexity of mutual information along the Ornstein-Uhlenbeck flow. We considered the gradient flow interpretation of the Ornstein-Uhlenbeck process in the space of measures, and derived formulae for the various derivatives of relative entropy and mutual information.
We have shown that mutual information is eventually convex under rather general conditions on the initial distribution. We have also shown examples in which mutual information is concave at some small time.
These results generalize the behaviors seen in the heat flow~\cite{WJ18}.

For simplicity in this paper we have treated only the case when the target Gaussian distribution has isotropic covariance.
It is possible to extend our results to handle the case of a general covariance matrix. In this case, extra caution needs to be exercised since matrices in general do not commute. In the simple case when the covariance matrices of the initial and target distributions commute, our results extend naturally and the various thresholds are now controlled by the eigenvalues of the matrices.

As noted in the introduction, our interest in studying this problem is to better understand the general case of the Fokker-Planck process.
Indeed, there is an interesting dichotomy in which we understand the intricate properties of the Ornstein-Uhlenbeck process since we have an explicit solution, whereas we know very little about the general Fokker-Planck process.
The gradient flow interpretation applies to the general Fokker-Planck process and provides information for the convexity properties of the relative entropy if the target measure is log-concave.
However, much is not known, even about mutual information.
Hence in this paper we have attempted to settle the case of the Ornstein-Uhlenbeck process.
Even in this case some of our results are not tight and can be improved.

Some interesting future directions are to understand the convexity property of the solution to the Fokker-Planck process, even in the nice case when the target measure is strongly log-concave.
For example, does the Fokker-Planck process preserve log-concavity relative to the target measure?
Furthermore, is self-information (mutual information at initial time) for discrete initial distribution still equal to the discrete entropy for the Fokker-Planck process?
In general, it is interesting to bridge the gap in our understanding between the Ornstein-Uhlenbeck process and the general Fokker-Planck process.
One avenue to do that may be to study a perturbation of the Ornstein-Uhlenbeck process, when the target distribution is a small perturbation of the Gaussian.

\bibliographystyle{IEEEtran}
\bibliography{mi_ou_arxiv_v2.bbl}

\newpage

\appendix

\subsection{Some results for general Fokker-Planck flow}
\label{App:FP}

Let $\nu$ be a probability distribution in $\R^n$, which we represent via its density function with respect to the Lebesgue measure.
We assume $\nu \colon \R^n \to \R$ is smooth and positive everywhere, and let $f = -\log \nu$ be its negative log density.

We recall the {\em Fokker-Planck process} in $\R^n$ for the target measure $\nu$ is the stochastic differential equation
\begin{align}\label{Eq:FPProc}
dX = -\nabla f(X) \, dt + \sqrt{2} \, dW
\end{align}
where $X = (X_t)_{t \ge 0}$ is a stochastic process in $\R^n$ and $W = (W_t)_{t \ge 0}$ is the standard Brownian motion in $\R^n$.

In the space of measures, the stochastic process above corresponds to the {\em Fokker-Planck equation}, which is the following partial differential equation:
\begin{align}\label{Eq:FPEq}
\part{\rho}{t} = \nabla \cdot (\rho \nabla f) + \Delta \rho
\end{align}
where $\rho = \rho(x,t)$ for $x \in \R^n$, $t \ge 0$.
This means if the random variable $X_t \sim \rho_t$ evolves following the Fokker-Planck process~\eqref{Eq:FPProc}, then its probability density function $\rho(x,t) = \rho_t(x)$ evolves following the Fokker-Planck equation~\eqref{Eq:FPEq}.
We refer to the mapping $X_0 \mapsto X_t$ under~\eqref{Eq:FPProc}, or equivalently, the mapping $\rho_0 \mapsto \rho_t$ under~\eqref{Eq:FPEq}, as the {\em Fokker-Planck flow}.

We recall the interpretation of the Fokker-Planck flow as the gradient flow $\dot \rho = -\grad_\rho H_\nu$ of relative entropy
$$H_\nu(\rho) = \int_{\R^n} \rho(x) \log \frac{\rho(x)}{\nu(x)} \, dx$$
with respect to the target probability measure
$\nu = e^{-f}$
in the space of measures with the Wasserstein metric induced by the squared Euclidean metric in $\R^n$; see for example~\cite{JKO98,OV00,Vil03,Vil08}.
Then we can invoke general gradient flow identities to obtain information on the behavior of certain quantities---such as the relative entropy itself---along the Fokker-Planck flow.

For example, the derivative of the function value along its own gradient flow is given by the gradient squared.
That is, along $\dot \rho = -\grad_\rho H_\nu$, we have $\frac{d}{dt} H_\nu(\rho) = -\|\grad_\rho H_\nu\|^2_\rho$.
For us, this becomes the following identity on the derivative of relative entropy along the Fokker-Planck flow:
\begin{align}\label{Eq:FPDer}
\frac{d}{dt} H_\nu(\rho) = -J_\nu(\rho)
\end{align}
where
\begin{align}\label{Eq:RelJ}
J_\nu(\rho) = \int_{\R^n} \rho(x) \left\| \nabla \log \frac{\rho(x)}{\nu(x)} \right\|^2 \, dx
\end{align}
is the relative Fisher information of $\rho$ with respect to $\nu$.

Similarly, the second derivative of the function value along its own gradient flow is given by the Hessian operator applied to the gradient.
That is, along $\dot \rho = -\grad_\rho F$, we have $\frac{d^2}{dt^2} H_\nu(\rho) = 2 (\Hess_\rho H_\nu)(\grad_\rho H_\nu)$.
In our case, with the Hessian formula for relative entropy~\cite[Formula 15.7]{Vil08}, this becomes the following identity:
\begin{align}\label{Eq:FP2Der}
\frac{d^2}{dt^2} H_\nu(\rho) = 2K_\nu(\rho) + 2G_\nu^f(\rho)
\end{align}
where
\begin{align}\label{Eq:RelK}
K_\nu(\rho) = \int_{\R^n} \rho(x) \left\| \nabla^2 \log \frac{\rho(x)}{\nu(x)} \right\|^2_{\HS} dx
\end{align}
is the second-order relative Fisher information of $\rho$ with respect to $\nu$,
and
{\small
\begin{align}\label{Eq:Leftover}
G_\nu^f(\rho) = \int_{\R^n} \rho(x) \left\langle \nabla \log \frac{\rho(x)}{\nu(x)}, \big(\nabla^2 f(x)\big) \, \nabla \log \frac{\rho(x)}{\nu(x)} \right \rangle dx
\end{align}
}
is the leftover term.

\subsection{Proof of Lemma~\ref{Lem:OUDer}}
\label{App:OUDer}

The identities for the OU flow in Lemma~\ref{Lem:OUDer} follow from the general identities for the Fokker-Planck flow as described in Appendix~\ref{App:FP}, specialized to the case when $\nu = \N(0,\frac{1}{\alpha}I)$.

The first identity in Lemma~\ref{Lem:OUDer} is the same as the identity~\eqref{Eq:FPDer}.

The second identity in Lemma~\ref{Lem:OUDer} follows from the identity~\eqref{Eq:FP2Der}, together with the calculation that the leftover term~\eqref{Eq:Leftover} in the case $\nu = \N(0,\frac{1}{\alpha}I)$ with $\nabla^2 f(x) = \alpha I$ becomes:
\begin{align*}
G_\nu^f(\rho) &= \int_{\R^n} \rho(x) \left\langle \nabla \log \frac{\rho(x)}{\nu(x)}, \big(\alpha I\big) \, \nabla \log \frac{\rho(x)}{\nu(x)} \right \rangle \, dx \\
&= \alpha \int_{\R^n} \rho(x) \left\| \nabla \log \frac{\rho(x)}{\nu(x)} \right\|^2 \, dx \\
&= \alpha \, J_\nu(\rho)
\end{align*}
as desired.

\subsection{Proof of Lemma~\ref{Lem:OUMutDer}}
\label{App:OUMutDer}

The identities in Lemma~\ref{Lem:OUMutDer} follow from Lemma~\ref{Lem:OUDer} and the linearity of the OU channel.

Concretely, recall by Lemma~\ref{Lem:OUDer} that $\frac{d}{dt} H_\nu(X_t) = -J_\nu(X_t)$.
We apply this result to the conditional density $\rho_{X_t|X_0}(\cdot\,|\,x_0)$ to get $\frac{d}{dt} H_\nu(X_t\,|\,X_0\!=\!x_0) = -J_\nu(X_t\,|\,X_0\!=\!x_0)$ for each $x_0 \in \R^n$.
Taking expectation over $X_0 \sim \rho_0$ and interchanging the order of expectation and differentiation yields $\frac{d}{dt} H_\nu(X_t\,|\,X_0) = -J_\nu(X_t\,|\,X_0)$.
Combining this with the earlier result above yields
$\frac{d}{dt} I(X_0;X_t) = \frac{d}{dt} H_\nu(X_t) - \frac{d}{dt} H_\nu(X_t\,|\,X_0) = -J_\nu(X_t) + J_\nu(X_t\,|\,X_0) = -J_\nu(X_0;X_t)$, as desired.

The proof for the second identity in Lemma~\ref{Lem:OUMutDer} proceeds identically using the second identity in Lemma~\ref{Lem:OUDer}.

\subsection{Some results on general mutual relative Fisher information}
\label{App:Fish}

We review some results on mutual relative first and second-order Fisher information for general distributions.
These will be useful in proving Lemmas~\ref{Lem:mmse} and~\ref{Lem:SuffConv}.

Let $(X,Y) \sim \rho_{XY}$ be a joint random variable in $\R^n \times \R^n$ with a smooth density $\rho_{XY}$, which we can factorize into
$$\rho_{XY}(x,y) = \rho_Y(y) \, \rho_{X|Y}(x\,|\,y) = \rho_X(x) \, \rho_{Y|X}(y\,|\,x).$$

\subsubsection{First-order}

Recall that the {\em pointwise backward Fisher information matrix} of $X$ given $Y=y$ is
\begin{align}
&\widetilde \Phi(X\,|\,Y\!=\!y) \notag \\
&= \int_{\R^n} \rho_{X|Y}(x|y) (\nabla_y \log \rho_{X|Y}(x|y))(\nabla_y \log \rho_{X|Y}(x|y))^\top dx \notag \\
&= -\int_{\R^n} \rho_{X|Y}(x\,|\,y) \, \nabla^2_y \log \rho_{X|Y}(x\,|\,y) \, dx  \label{Eq:PhiPtw}
\end{align}
where the second equality follows from integration by parts, assuming the boundary terms vanish.
The {\em backward Fisher information matrix} of $X$ given $Y$ is the average of the pointwise matrices:
$$\widetilde \Phi(X\,|\,Y) = \int_{\R^n} \rho_Y(y) \, \widetilde \Phi(X\,|\,Y\!=\!y) \, dy.$$
The {\em backward Fisher information} of $X$ given $Y$ is the trace:
$$\Phi(X\,|\,Y) = \Tr(\widetilde \Phi(X\,|\,Y)).$$
Note that $\widetilde \Phi(X\,|\,Y\!=\!y) \succeq 0$ for all $y \in \R^n$, so $\widetilde \Phi(X\,|\,Y) \succeq 0$ and $\Phi(X\,|\,Y) \ge 0$.

Recall the relative Fisher information matrix of $Y \sim \rho_Y$ with respect to a reference distribution $\nu$ is
\begin{align}
&\widetilde J_\nu(Y) = \int_{\R^n} \rho_Y(y) \left(\nabla_y \log \frac{\rho_Y(y)}{\nu(y)}\right)\left(\nabla_y \log \frac{\rho_Y(y)}{\nu(y)}\right)^\top dy \notag \\
&~~~~~~~= -\int_{\R^n} \rho_Y(y) \, \nabla^2_y \log \frac{\rho_Y(y)}{\nu(y)} \,  dy \notag \\
&~~~~~~~~~~ + \int_{\R^n} \rho_Y(y) \, \left( \nabla^2 \nu(y) + \nabla \nu(y) \, \nabla \nu(y)^\top \right) \, dy \label{Eq:JTilMut}
\end{align}
where the second equality follows from integration by parts, assuming the boundary terms vanish.
Recall also the definition of the {\em mutual relative Fisher information matrix}: 
$$\widetilde J_\nu(X;Y) = \widetilde J_\nu(Y\,|\,X) - \widetilde J_\nu(Y)$$
and of its trace, the {\em mutual relative Fisher information}:
$$J_\nu(X;Y) = \Tr(\widetilde J_\nu(X;Y)) = J_\nu(Y\,|\,X) - J_\nu(Y).$$
In general, the mutual relative Fisher information is equal to the backward Fisher information.

\begin{lemma}\label{Lem:JPhi}
For any joint random variable $(X,Y)$ and any probability measure $\nu$,
$$J_\nu(X;Y) = \Phi(X\,|\,Y).$$
\end{lemma}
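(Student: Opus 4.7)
The plan is a direct expansion using the Bayes factorization, exploiting the ``reverse chain rule'' identity
$$\nabla_y \log \rho_{Y|X}(y|x) = \nabla_y \log \rho_Y(y) + \nabla_y \log \rho_{X|Y}(x|y),$$
which follows by writing $\rho_{XY}(x,y) = \rho_Y(y)\rho_{X|Y}(x|y) = \rho_X(x)\rho_{Y|X}(y|x)$, taking $\log$, and differentiating in $y$ (noting $\rho_X(x)$ is constant in $y$).

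First I would expand the conditional relative Fisher information as
$$J_\nu(Y|X) = \int \rho_{XY}(x,y)\, \|\nabla_y \log \rho_{Y|X}(y|x) - \nabla \log \nu(y)\|^2 \, dx\, dy,$$
substitute the identity above in the form $\|(\nabla_y \log \rho_Y - \nabla \log \nu) + \nabla_y \log \rho_{X|Y}\|^2$, and split into three pieces: a squared norm depending only on $y$, the squared norm $\|\nabla_y \log \rho_{X|Y}(x|y)\|^2$, and a cross term. After marginalizing $x$, the first piece gives exactly $J_\nu(Y)$; by the definition of $\widetilde\Phi(X|Y\!=\!y)$ and taking traces, the second piece gives $\Phi(X|Y)$.

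The heart of the argument is showing the cross term vanishes. Factoring $\rho_{XY} = \rho_Y \rho_{X|Y}$ and pulling the $y$-dependent factor out, the cross term equals
$$2 \int \rho_Y(y) \left\langle \nabla_y \log \rho_Y(y) - \nabla \log \nu(y),\; \int \rho_{X|Y}(x|y)\, \nabla_y \log \rho_{X|Y}(x|y)\, dx \right\rangle dy.$$
The inner $x$-integral equals $\nabla_y \int \rho_{X|Y}(x|y)\, dx = \nabla_y 1 = 0$ by differentiating under the integral. Hence $J_\nu(Y|X) = J_\nu(Y) + \Phi(X|Y)$, which rearranges to $J_\nu(X;Y) = \Phi(X|Y)$.

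The main obstacle is purely technical: the justification of the interchange of $\nabla_y$ and $\int dx$, and relatedly the vanishing of the boundary terms implicit in the second equalities of~\eqref{Eq:PhiPtw} and~\eqref{Eq:JTilMut}. Both hold under mild decay/regularity assumptions on $\rho_{XY}$, which we implicitly assume throughout. As an alternative, shorter route one could first observe that $J_\nu(\rho) - J(\rho) = \int \rho\,(2\Delta \log \nu + \|\nabla \log \nu\|^2)\, dx$ is linear in $\rho$, so its mutual version vanishes and $J_\nu(X;Y) = J(X;Y)$; then prove the classical identity $J(X;Y) = \Phi(X|Y)$ by the same chain-rule argument applied in the case where $\nu$ is Lebesgue measure.
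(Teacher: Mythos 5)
Your proof is correct, but it takes a genuinely different route from the paper's. The paper works at the level of Hessians: it differentiates the Bayes factorization twice in $y$ to get $-\nabla^2_y \log \frac{\rho_{Y|X}}{\nu} = -\nabla^2_y \log \frac{\rho_Y}{\nu} - \nabla^2_y \log \rho_{X|Y}$, integrates against $\rho_{XY}$, and then uses the integration-by-parts identities~\eqref{Eq:PhiPtw} and~\eqref{Eq:JTilMut} to convert the Hessian integrals back into the score-outer-product definitions of $\widetilde J_\nu$ and $\widetilde \Phi$, with a common $\nu$-dependent term canceling on both sides. You instead work at the level of gradients: you expand $\|\nabla_y \log \frac{\rho_{Y|X}}{\nu}\|^2$ using the score decomposition and kill the cross term by the zero-mean-score identity $\int \rho_{X|Y}(x|y)\,\nabla_y \log \rho_{X|Y}(x|y)\,dx = \nabla_y 1 = 0$. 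Your version is the more elementary and standard argument, stays entirely in the outer-product picture, and concentrates all the regularity requirements in a single differentiation-under-the-integral step (which you correctly flag); it also yields the matrix identity $\widetilde J_\nu(Y|X) = \widetilde J_\nu(Y) + \widetilde\Phi(X|Y)$ if you keep outer products instead of squared norms. The paper's Hessian-level decomposition is heavier here but is the same machinery it reuses for the second-order statement (Lemma~\ref{Lem:KPsi}), where a gradient-level expansion would not suffice. Your closing observation that $J_\nu(\rho) - J(\rho)$ is linear in $\rho$, so that $J_\nu(X;Y) = J(X;Y)$ and one may reduce to the Lebesgue-reference case, is also valid and matches the paper's own remark that functionals differing by a linear term have the same mutual version.
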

\begin{proof}
From the factorization
$$\rho_X(x) \, \rho_{Y|X}(y\,|\,x) = \rho_Y(y) \, \rho_{X|Y}(x\,|\,y)$$
we have
$$-\nabla^2_y \log \frac{\rho_{Y|X}(y\,|\,x)}{\nu(y)} = -\nabla^2_y \log \frac{\rho_Y(y)}{\nu(y)} - \nabla^2_y \log \rho_{X|Y}(x\,|\,y).$$
We integrate both sides with respect to $\rho_{XY}(x,y)$.
On the left-hand side, by first integrating over $\rho_{Y|X}(y\,|\,x)$ for each fixed $x \in \R^n$ and using the relation~\eqref{Eq:JTilMut}, we obtain
$$\widetilde J_\nu(Y\,|\,X) - \int_{\R^n} \rho_Y(y) \, \left( \nabla^2 \nu(y) + \nabla \nu(y) \, \nabla \nu(y)^\top \right) \, dy.$$
On the right-hand side, by the relations~\eqref{Eq:JTilMut} and~\eqref{Eq:PhiPtw} we obtain
\begin{align*}
&\widetilde J_\nu(Y) - \int_{\R^n} \rho_Y(y) ( \nabla^2 \nu(y) + \nabla \nu(y) \, \nabla \nu(y)^\top ) dy
+ \widetilde \Phi(X|Y).
\end{align*}
Combining the two lines above and canceling the common integral terms, we get
$\widetilde J_\nu(Y\,|\,X) = \widetilde J_\nu(Y) + \widetilde \Phi(X\,|\,Y)$.
Equivalently,
$\widetilde J_\nu(X;Y) = \widetilde J_\nu(Y\,|\,X) - \widetilde J_\nu(Y) = \widetilde \Phi(X\,|\,Y).$
Taking trace gives
$$J_\nu(X;Y) = \Tr(\widetilde J_\nu(X;Y)) = \Tr(\widetilde \Phi(X\,|\,Y)) = \Phi(X\,|\,Y)$$
as desired.
\end{proof}

\subsubsection{Second-order}

We now recall that the {\em pointwise backward second-order Fisher information} of $X$ given $Y=y$ is
$$\Psi(X\,|\,Y\!=\!y) = \int_{\R^n} \rho_{X|Y}(x\,|\,y) \, \|\nabla^2_y \log \rho_{X|Y}(x\,|\,y)\|^2_{\HS} \, dx.$$
The {\em backward second-order Fisher information} of $X$ given $Y$ is the average:
$$\Psi(X\,|\,Y) = \int_{\R^n} \rho_Y(y) \, \Psi(X\,|\,Y\!=\!y) \, dy.$$
Note that $\Psi(X\,|\,Y\!=\!y) \ge 0$ for all $y \in \R^n$, so $\Psi(X\,|\,Y) \ge 0$.
We also recall the definition of the mutual relative second-order Fisher information 
$K_\nu(X;Y) = K_\nu(Y\,|\,X) - K_\nu(Y)$.

\begin{lemma}\label{Lem:KPsi}
For any joint random variable $(X,Y)$ and any probability measure $\nu$,
\begin{multline*}
K_\nu(X;Y) = \Psi(X\,|\,Y) \, + \\ 2 \int_{\R^n} \rho_Y(y) \left\langle -\nabla^2 \log \frac{\rho_Y(y)}{\nu(y)}, \, \widetilde \Phi(X\,|\,Y\!=\!y) \right\rangle_{\HS} dy.
\end{multline*}
\end{lemma}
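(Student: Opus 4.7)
The plan is to mirror the structure of the proof of Lemma~\ref{Lem:JPhi}, but now at the level of Hessians and Hilbert--Schmidt norms rather than gradients and Euclidean norms. The starting point is again the factorization
$$\rho_X(x)\,\rho_{Y|X}(y\,|\,x)=\rho_Y(y)\,\rho_{X|Y}(x\,|\,y).$$
Taking logs, subtracting $\log \nu(y)$ on both sides, and applying $\nabla^2_y$, the term $\log \rho_X(x)$ is killed (it does not depend on $y$), so we obtain the pointwise identity
$$\nabla^2_y \log \frac{\rho_{Y|X}(y\,|\,x)}{\nu(y)}=\nabla^2_y \log \frac{\rho_Y(y)}{\nu(y)}+\nabla^2_y \log \rho_{X|Y}(x\,|\,y).$$

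Next, I would square both sides in the Hilbert--Schmidt inner product and integrate against $\rho_{XY}(x,y)$. Writing $A(y)=\nabla^2_y\log\frac{\rho_Y(y)}{\nu(y)}$ and $B(x,y)=\nabla^2_y\log\rho_{X|Y}(x\,|\,y)$, expansion of $\|A+B\|_{\HS}^2$ yields three pieces. The pure $A$ piece integrates, after marginalizing out $x$, to $K_\nu(Y)$. The pure $B$ piece integrates, conditionally on $Y=y$ first, to $\int \rho_Y(y)\Psi(X\,|\,Y\!=\!y)\,dy=\Psi(X\,|\,Y)$ by the very definition of the pointwise backward second-order Fisher information. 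On the left-hand side, the same marginalization identifies $\int \rho_{XY}(x,y)\|A+B\|_{\HS}^2\,dx\,dy=K_\nu(Y\,|\,X)$.

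The only nontrivial piece is the cross term
$$2\int \rho_Y(y)\,\Bigl\langle A(y),\;\int \rho_{X|Y}(x\,|\,y)\,B(x,y)\,dx\Bigr\rangle_{\HS}\,dy.$$
Here I would invoke the definition~\eqref{Eq:PhiPtw} of the pointwise backward Fisher information matrix in its Hessian form, namely $\int \rho_{X|Y}(x\,|\,y)\,\nabla^2_y\log\rho_{X|Y}(x\,|\,y)\,dx=-\widetilde\Phi(X\,|\,Y\!=\!y)$. This turns the cross term into
$$2\int \rho_Y(y)\,\bigl\langle -A(y),\,\widetilde\Phi(X\,|\,Y\!=\!y)\bigr\rangle_{\HS}\,dy.$$
Rearranging the resulting identity $K_\nu(Y\,|\,X)=K_\nu(Y)+\Psi(X\,|\,Y)+(\text{cross term})$ gives exactly the claimed formula for $K_\nu(X;Y)=K_\nu(Y\,|\,X)-K_\nu(Y)$.

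The only technical subtlety, not really an obstacle but worth flagging, is the integration-by-parts step that produces the Hessian form of $\widetilde\Phi(X\,|\,Y\!=\!y)$ in~\eqref{Eq:PhiPtw}: it requires the usual decay assumption that boundary terms at infinity vanish, matching the standing regularity assumed elsewhere in the paper. Once that is granted, the calculation is a direct expansion with no matrix-noncommutativity issue, since every Hilbert--Schmidt inner product between symmetric matrices is handled componentwise.
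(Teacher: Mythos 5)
Your proof is correct and follows essentially the same route as the paper's: differentiate the log-factorization twice in $y$, expand the Hilbert--Schmidt norm of the resulting three-term Hessian identity, integrate against $\rho_{XY}$, and identify the cross term with $\widetilde\Phi(X\,|\,Y\!=\!y)$ via the integration-by-parts form~\eqref{Eq:PhiPtw}. The sign bookkeeping and the regularity caveat about vanishing boundary terms both match the paper's treatment.
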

\begin{proof}
As before, we have the decomposition
$$-\nabla^2_y \log \frac{\rho_{Y|X}(y\,|\,x)}{\nu(y)} = -\nabla^2_y \log \frac{\rho_Y(y)}{\nu(y)} - \nabla^2_y \log \rho_{X|Y}(x\,|\,y).$$
Taking the squared norm on both sides and expanding, we get
\begin{align*}
&\left\|\nabla^2_y \log \frac{\rho_{Y|X}(y\,|\,x)}{\nu(y)}\right\|^2_{\HS} \\
&~~~~= \left\|\nabla^2_y \log \frac{\rho_Y(y)}{\nu(y)}\right\|^2_{\HS} + \|\nabla^2_y \log \rho_{X|Y}(x\,|\,y)\|^2_{\HS} \\ 
&~~~~~~~~ + 2 \left\langle -\nabla^2_y \log \frac{\rho_Y(y)}{\nu(y)}, -\nabla^2_y \log \rho_{X|Y}(x\,|\,y) \right\rangle_{\HS}.
\end{align*}
We integrate both sides with respect to $\rho_{XY}(x,y)$.
On the left-hand side we get $K_\nu(Y\,|\,X)$.
The first term on the right-hand side
gives $K_\nu(Y)$. 
The second term gives $\Psi(X\,|\,Y)$.
For the third term,
by first integrating over $\rho_{X|Y}(x\,|\,y)$ we obtain an inner product with $\widetilde \Phi(X\,|\,Y\!=\!y)$.
That is,
\begin{multline*}
K_\nu(Y\,|\,X) = K(Y) + \Psi_\nu(X\,|\,Y) \\
+ 2 \int_{\R^n} \rho_Y(y) \left\langle -\nabla^2_y \log \frac{\rho_Y(y)}{\nu(y)}, \, \widetilde \Phi(X\,|\,Y\!=\!y)\right\rangle_{\HS} dy.
\end{multline*}
This implies the desired expression for $K_\nu(X;Y) = K_\nu(Y\,|\,X)-K_\nu(Y)$.
\end{proof}

Specializing to the case of Gaussian target measure $\nu = \N(0,\frac{1}{\alpha}I)$, we have the following bound.

\begin{lemma}\label{Lem:KGau}
Let $\nu = \N(0,\frac{1}{\alpha}I)$.
For any joint random variable $(X,Y)$, if $Y \sim \rho_Y$ is $\lambda$-strongly log-concave for some $\lambda > 0$, then
$$K_\nu(X;Y) \ge 2(\lambda-\alpha) \Phi(X\,|\,Y).$$
\end{lemma}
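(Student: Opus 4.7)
The plan is to apply Lemma~\ref{Lem:KPsi} directly, then use the Gaussian structure of $\nu$ together with the log-concavity hypothesis to bound the inner product term from below.

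First I would write down the identity from Lemma~\ref{Lem:KPsi},
\begin{equation*}
K_\nu(X;Y) = \Psi(X\,|\,Y) + 2 \int_{\R^n} \rho_Y(y) \left\langle -\nabla^2 \log \tfrac{\rho_Y(y)}{\nu(y)},\, \widetilde \Phi(X\,|\,Y\!=\!y)\right\rangle_{\HS} dy,
\end{equation*}
and note that $\Psi(X\,|\,Y) \ge 0$, so it is enough to lower bound the integral. Since $\nu = \N(0,\tfrac{1}{\alpha} I)$, we have $-\nabla^2 \log \nu(y) = \alpha I$, so
\begin{equation*}
-\nabla^2 \log \tfrac{\rho_Y(y)}{\nu(y)} = -\nabla^2 \log \rho_Y(y) - \alpha I \succeq (\lambda - \alpha)\, I,
\end{equation*}
where the inequality uses the assumption that $Y$ is $\lambda$-strongly log-concave.

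The next step is the standard PSD trace inequality: for any symmetric matrix $A \succeq c I$ and any PSD matrix $B \succeq 0$, one has $\langle A, B\rangle_{\HS} = \Tr(AB) \ge c \,\Tr(B)$. Applying this pointwise in $y$ with $A = -\nabla^2 \log(\rho_Y/\nu)(y)$, $c = \lambda - \alpha$, and $B = \widetilde \Phi(X\,|\,Y\!=\!y) \succeq 0$ yields
\begin{equation*}
\left\langle -\nabla^2 \log \tfrac{\rho_Y(y)}{\nu(y)},\, \widetilde \Phi(X\,|\,Y\!=\!y)\right\rangle_{\HS} \ge (\lambda - \alpha)\, \Tr\!\bigl(\widetilde \Phi(X\,|\,Y\!=\!y)\bigr).
\end{equation*}
Integrating against $\rho_Y(y)\,dy$ and recognizing that $\int \rho_Y(y) \Tr(\widetilde \Phi(X\,|\,Y\!=\!y))\,dy = \Phi(X\,|\,Y)$ by the definitions of the backward Fisher information matrix and its trace, together with dropping the nonnegative $\Psi(X\,|\,Y)$ term, completes the proof. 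The bound is valid for any sign of $\lambda - \alpha$: if $\lambda \ge \alpha$ it gives a genuine nonnegative lower bound, while if $\lambda < \alpha$ the inequality is still correct but weaker.

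There is no real obstacle here; the proof is mostly bookkeeping around the Hessian decomposition. The only substantive step is recognizing that the Gaussian reference measure contributes exactly $\alpha I$ to the Hessian, which is what allows the log-concavity of $\rho_Y$ to be transferred into a lower bound on $-\nabla^2 \log(\rho_Y/\nu)$ with the explicit constant $\lambda - \alpha$.
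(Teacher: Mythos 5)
Your proof is correct and follows essentially the same route as the paper: both start from Lemma~\ref{Lem:KPsi}, use $-\nabla^2\log\nu = \alpha I$, apply the PSD trace inequality to the inner-product term, and drop the nonnegative $\Psi(X\,|\,Y)$. The only cosmetic difference is that the paper first splits off the $-2\alpha\Phi(X\,|\,Y)$ contribution and then bounds the $-\nabla^2\log\rho_Y$ part by $\lambda I$, whereas you bound the combined Hessian $-\nabla^2\log(\rho_Y/\nu) \succeq (\lambda-\alpha)I$ in one step.
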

\begin{proof}
Since $\nu = \N(0,\frac{1}{\alpha}I)$, we have $-\nabla^2 \nu(y) = \alpha I$, so the identity in Lemma~\ref{Lem:KPsi} becomes
\begin{multline*}
K_\nu(X;Y) = \Psi(X\,|\,Y) - 2\alpha \Phi(X\,|\,Y) \, + \\ 2 \int_{\R^n} \rho_Y(y) \left\langle -\nabla^2 \log \rho_Y(y), \, \widetilde \Phi(X\,|\,Y\!=\!y) \right\rangle_{\HS} dy.
\end{multline*}
Since $\rho_Y$ is $\lambda$-strongly log-concave, $-\nabla^2 \log \rho_Y(y) \succeq \lambda I$ for all $y \in \R^n$.
Since $\widetilde \Phi(X\,|\,Y\!=\!y) \succeq 0$, we can use this inequality in the inner product in the integral term above, to get
$$K_\nu(X;Y) \ge \Psi(X\,|\,Y) + 2(\lambda-\alpha) \Phi(X\,|\,Y).$$
Finally, since $\Psi(X\,|\,Y) \ge 0$, we can drop it to obtain the desired conclusion.
\end{proof}

\subsection{Proof of Lemma~\ref{Lem:mmse}}
\label{App:mmse}

Let $X_0 \sim \rho_0$ and let $X_t \sim \rho_0$ be the OU flow from $X_0$.
For $y \in \R^n$, let $\rho_{0|t}(\cdot\,|\,y)$ denote the conditional distribution of $X_0\,|\,\{X_t=y\}$.
For $t > 0$, let $\tau_\alpha(t) = \frac{1}{\alpha}(1-e^{-2\alpha t})$.
We begin with the following preliminary results.

\begin{lemma}\label{Lem:HessCov}
Along the OU flow for $\nu = \N(0,\frac{1}{\alpha}I)$, for all $x,y \in \R^n$,
\begin{align*} 
-\nabla^2_y \log \rho_{0|t}(x\,|\,y) = \frac{\alpha^2 e^{-2\alpha t}}{(1-e^{-2\alpha t})^2} \Cov(\rho_{0|t}(\cdot\,|\,y)).
\end{align*}
In particular, note that it is a constant in $x$.
\end{lemma}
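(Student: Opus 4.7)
The plan is to use Bayes' rule to express $\rho_{0|t}(x\,|\,y) \propto \rho_0(x)\rho_{t|0}(y\,|\,x)/\rho_t(y)$ (as a function of $y$), so that $\log \rho_{0|t}(x\,|\,y) = \log \rho_0(x) + \log \rho_{t|0}(y\,|\,x) - \log \rho_t(y)$. The forward transition is Gaussian, $\rho_{t|0}(y\,|\,x) \propto \exp(-\|y - e^{-\alpha t} x\|^2/(2\tau_\alpha(t)))$, so $\nabla_y^2 \log \rho_{t|0}(y\,|\,x) = -I/\tau_\alpha(t)$. Hence
\[
\nabla^2_y \log \rho_{0|t}(x\,|\,y) = -\frac{I}{\tau_\alpha(t)} - \nabla^2_y \log \rho_t(y),
\]
which is manifestly independent of $x$; the content of the lemma is the identification of the right-hand side with the scaled conditional covariance.

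Next I would compute $\nabla^2_y \log \rho_t(y)$ via the score of the marginal. Differentiating $\rho_t(y) = \int \rho_0(x)\rho_{t|0}(y\,|\,x)\,dx$ under the integral and dividing by $\rho_t(y)$ gives the Tweedie-type identity
\[
\nabla_y \log \rho_t(y) = -\frac{y - e^{-\alpha t} m(y)}{\tau_\alpha(t)}, \qquad m(y) := \E[X_0\,|\,X_t = y].
\]
Taking a further gradient yields $\nabla^2_y \log \rho_t(y) = -\bigl(I - e^{-\alpha t}\,Dm(y)\bigr)/\tau_\alpha(t)$, so the problem reduces to computing the Jacobian $Dm(y)$.

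For $Dm(y) = \nabla_y \int x\,\rho_{0|t}(x\,|\,y)\,dx$, I would combine the two score formulas above to obtain
\[
\nabla_y \log \rho_{0|t}(x\,|\,y) = \nabla_y \log \rho_{t|0}(y\,|\,x) - \nabla_y \log \rho_t(y) = \frac{e^{-\alpha t}}{\tau_\alpha(t)}\bigl(x - m(y)\bigr),
\]
and then differentiate under the integral to get
\[
Dm(y) = \int x\,\bigl(\nabla_y \log \rho_{0|t}(x\,|\,y)\bigr)^\top \rho_{0|t}(x\,|\,y)\,dx = \frac{e^{-\alpha t}}{\tau_\alpha(t)}\,\Cov(\rho_{0|t}(\cdot\,|\,y)),
\]
where the subtraction of the $m(y)$ term costs nothing because $\int (x - m(y))\,\rho_{0|t}(x\,|\,y)\,dx = 0$. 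Substituting back,
\[
-\nabla^2_y \log \rho_{0|t}(x\,|\,y) = \frac{I}{\tau_\alpha(t)} + \nabla^2_y \log \rho_t(y) = \frac{e^{-2\alpha t}}{\tau_\alpha(t)^2}\,\Cov(\rho_{0|t}(\cdot\,|\,y)),
\]
and plugging in $\tau_\alpha(t) = (1-e^{-2\alpha t})/\alpha$ gives the claimed prefactor $\alpha^2 e^{-2\alpha t}/(1-e^{-2\alpha t})^2$.

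No step is really hard; the main technical issues are purely bookkeeping. The slight subtlety worth flagging is that the $x$-independence of the Hessian is visible only after Bayes' rule splits off $\rho_0(x)$, which has zero $y$-gradient; and that differentiations under the integral are legitimate, which follows from the Gaussian smoothing (the integrand is $C^\infty$ in $y$ with all derivatives dominated by integrable functions, uniformly in $y$ on compacts). An equivalent slicker route, once one has verified that $\nabla^2_y \log \rho_{0|t}(x\,|\,y)$ is independent of $x$, is to average both sides against $\rho_{0|t}(\cdot\,|\,y)$ and invoke the integration-by-parts identity that the expected Hessian of a log-density equals minus the expected outer product of its score; this immediately produces the covariance on the right-hand side from the score formula above.
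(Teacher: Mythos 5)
Your proof is correct, but it goes in the opposite direction from the paper's. The paper freezes $y$ and views $\rho_{0|t}(\cdot\,|\,y)$ as an exponential family \emph{in $x$} with natural parameter $\eta = \frac{\alpha e^{-\alpha t}}{1-e^{-2\alpha t}}\,y$: the $x$-dependence of $\log\rho_{0|t}$ then sits entirely in a term linear in $\eta$, so the $\eta$-Hessian is $-\nabla^2_\eta L(\eta)$, the log-partition identity gives $\nabla^2_\eta L(\eta)=\Cov(\rho_{0|t}(\cdot\,|\,y))$, and the chain rule through $\eta(y)$ produces the prefactor in one step. You instead split via Bayes in the $y$-direction, reduce to $\nabla^2_y\log\rho_t$ through the Tweedie identity $\nabla_y\log\rho_t(y) = -(y-e^{-\alpha t}m(y))/\tau_\alpha(t)$, and compute the Jacobian $Dm(y)=\frac{e^{-\alpha t}}{\tau_\alpha(t)}\Cov(\rho_{0|t}(\cdot\,|\,y))$; all three steps check out, including the centering argument that lets you replace $x$ by $x-m(y)$ in the outer product. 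What your route buys is that the identity $-\nabla^2_y\log\rho_t(y)=\frac{1}{\tau_\alpha(t)}\bigl(I-\frac{\alpha e^{-2\alpha t}}{1-e^{-2\alpha t}}\Cov(\rho_{0|t}(\cdot\,|\,y))\bigr)$ falls out as an intermediate step --- this is exactly the paper's Lemma~\ref{Lem:HesOU}, which the paper instead \emph{derives from} Lemma~\ref{Lem:HessCov}; so you have effectively proved the two lemmas in reverse order. One small terminological point: the ``slicker route'' you mention at the end is not integration by parts in $x$ but rather twice differentiating the normalization $\int\rho_{0|t}(x\,|\,y)\,dx=1$ in $y$, which is the identity~\eqref{Eq:PhiPtw} the paper uses later; as stated it only gives the $\rho_{0|t}(\cdot\,|\,y)$-average of the Hessian, so it genuinely requires the $x$-independence you establish first.
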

\begin{proof}
From the explicit solution $X_t \stackrel{d}{=} e^{-\alpha t} X_0 + \sqrt{\tau_\alpha(t)} Z$, where $Z \sim \N(0,I)$ is independent of $X_0$, we can write
\begin{align*}
\rho_{0|t}(x\,|\,y) &\propto \rho_0(x) \cdot \rho_{t|0}(y\,|\,x) \\
&\propto \rho_0(x) \cdot e^{-\frac{\|y-e^{-\alpha t}x\|^2}{2 \tau_\alpha(t)}}  \\
&\propto \rho_0(x) \cdot e^{\frac{\alpha e^{-\alpha t}}{1-e^{-2\alpha t}} \langle y,x \rangle} \cdot e^{-\frac{\alpha^2 e^{-2\alpha t}}{2(1-e^{-2\alpha t})^2} \|x\|^2}
\end{align*}
where the proportionality above is in terms of $x$.
Therefore, we can write the conditional density $\rho_{0|t}(x\,|\,y)$ as an exponential family distribution over $x$ with parameter $\eta = \frac{\alpha e^{-\alpha t}}{1-e^{-2\alpha t}} y$:
$$\rho_{0|t}(x\,|\,y) = h(x) e^{\langle x,\eta \rangle - L(\eta)}$$
where $h(x) = \rho_0(x) e^{-\frac{\alpha^2 e^{-2\alpha t}}{2(1-e^{-2\alpha t})^2} \|x\|^2}$ is the base measure, and
$$L(\eta) = \log \int_{\R^n} h(x) e^{\langle x,\eta \rangle} \, dx$$
is the log-partition function, or normalizing constant.
Then by chain rule, we have
\begin{align*}
-\nabla^2_y \log \rho_{0|t}(x\,|\,y)
&= - \left(\part{\eta}{y}\right)  \left( \nabla^2_\eta \log \rho_{0|t}(x\,|\,y) \right) \left(\part{\eta}{y}\right)^\top \\
&= \frac{\alpha^2 e^{-2\alpha t}}{(1-e^{-2\alpha t})^2} \nabla^2_\eta L(\eta).
\end{align*}
By a general identity for exponential family~\cite{WJ08}, or simply by differentiating, we have that
$$\nabla^2_\eta L(\eta) = \Cov(\rho_{0|t}(\cdot\,|\,y)).$$
Combining the two expressions above yields the result.
\end{proof}

\begin{lemma}\label{Lem:PhiOU}
Along the OU flow for $\nu = \N(0,\frac{1}{\alpha}I)$,
$$\Phi(X_0\,|\,X_t) = \frac{\alpha^2 e^{-2\alpha t}}{(1-e^{-2\alpha t})^2} \Var(X_0\,|\,X_t).$$
\end{lemma}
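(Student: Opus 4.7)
The plan is to derive Lemma~\ref{Lem:PhiOU} as a direct consequence of Lemma~\ref{Lem:HessCov} by unwinding the definition of the backward Fisher information. The key observation is that Lemma~\ref{Lem:HessCov} shows $-\nabla^2_y \log \rho_{0|t}(x\,|\,y)$ does not depend on $x$, which allows the integral in the pointwise backward Fisher information matrix to collapse trivially.

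First I would start from the definition in~\eqref{Eq:PhiPtw},
\begin{align*}
\widetilde \Phi(X_0\,|\,X_t\!=\!y) = -\int_{\R^n} \rho_{0|t}(x\,|\,y) \, \nabla^2_y \log \rho_{0|t}(x\,|\,y) \, dx,
\end{align*}
and plug in Lemma~\ref{Lem:HessCov}. Because the Hessian is constant in $x$, the factor $\int \rho_{0|t}(x\,|\,y)\,dx = 1$ just falls out, leaving
\begin{align*}
\widetilde \Phi(X_0\,|\,X_t\!=\!y) = \frac{\alpha^2 e^{-2\alpha t}}{(1-e^{-2\alpha t})^2} \Cov(\rho_{0|t}(\cdot\,|\,y)).
\end{align*}

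Next I would take the trace and average over $y \sim \rho_t$. Taking the trace gives $\Phi(X_0\,|\,X_t\!=\!y)$ on the left, and on the right gives $\Tr \Cov(\rho_{0|t}(\cdot\,|\,y))$, which is the (scalar) conditional variance of $X_0$ given $X_t = y$. Averaging over $\rho_t$ then yields
\begin{align*}
\Phi(X_0\,|\,X_t) = \frac{\alpha^2 e^{-2\alpha t}}{(1-e^{-2\alpha t})^2} \, \E_{X_t}\!\left[\Tr \Cov(X_0\,|\,X_t)\right],
\end{align*}
and the expectation on the right is precisely what is denoted $\Var(X_0\,|\,X_t)$ (i.e., the MMSE of estimating $X_0$ from $X_t$). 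This gives the claimed identity.

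There is no real obstacle; the whole content is packaged inside Lemma~\ref{Lem:HessCov}. The only minor point worth being explicit about is the convention that $\Var(X_0\,|\,X_t) = \E[\Tr \Cov(X_0\,|\,X_t)]$, so that tracing and averaging the matrix-valued statement of Lemma~\ref{Lem:HessCov} lines up with the scalar statement of Lemma~\ref{Lem:PhiOU}. Once this convention is noted, the proof is a one-line computation.
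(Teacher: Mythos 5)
Your proof is correct and is essentially identical to the paper's: both integrate the $x$-independent identity of Lemma~\ref{Lem:HessCov} against $\rho_{0|t}(\cdot\,|\,y)$ to get the pointwise matrix $\widetilde\Phi(X_0\,|\,X_t\!=\!y)$, then average over $\rho_t$ and take the trace. The remark about the convention $\Var(X_0\,|\,X_t) = \E[\Tr\Cov(X_0\,|\,X_t)]$ is exactly the point the paper handles by ``using the definition of conditional covariance.''
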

\begin{proof}
For each $y \in \R^n$, by integrating the identity in Lemma~\ref{Lem:HessCov} with respect to $\rho_{0|t}(\cdot\,|\,y)$ and using the definition~\eqref{Eq:PhiPtw} on the left-hand side, we obtain
$$\widetilde \Phi(X_0\,|\,X_t=y) = \frac{\alpha^2 e^{-2\alpha t}}{(1-e^{-2\alpha t})^2} \Cov(\rho_{0|t}(\cdot\,|\,y)).$$
Now integrating with respect to $\rho_t(y)$ and using the definition of conditional covariance give
$$\widetilde \Phi(X_0\,|\,X_t) = \frac{\alpha^2 e^{-2\alpha t}}{(1-e^{-2\alpha t})^2} \Cov(X_0\,|\,X_t).$$
Finally, taking trace on both sides gives the desired conclusion.
\end{proof}

\begin{proof}[Proof of Lemma~\ref{Lem:mmse}]
Lemma~\ref{Lem:mmse} follows by combining Lemmas~\ref{Lem:JPhi} and~\ref{Lem:PhiOU}.
\end{proof}

\subsection{Proof of Lemma~\ref{Lem:SuffConv}}

By Lemma~\ref{Lem:KGau} for $(X_0,X_t)$ with $\lambda = \frac{\alpha}{2}$, and by using Lemma~\ref{Lem:JPhi}, we have
$$K_\nu(X_0;X_t) \ge -\alpha \Phi(X_0\,|\,X_t) = -\alpha J_\nu(X_0;X_t).$$
Then by the second identity in Lemma~\ref{Lem:OUMutDer}, we get
$$\frac{d^2}{dt^2} I(X_0;X_t) = 2 \big( K_\nu(X_0;X_t) + \alpha J_\nu(X_0;X_t)\big) \ge 0$$
which means mutual information is convex at time $t$.

\subsection{Proof of Lemma~\ref{Lem:SLC}}

We use the explicit solution $X_t = e^{-\alpha t} X_0 + \sqrt{\tau_\alpha(t)} Z$, where $\tau_\alpha(t) = \frac{1}{\alpha}(1-e^{-2\alpha t})$ and $Z \sim \N(0,I)$ is independent of $X_0$.

Since $X_0 \sim \rho_0$ is $\lambda$-strongly log-concave, $e^{-\alpha t} X_0$ is $(e^{2\alpha t} \lambda)$-strongly log-concave.
Furthermore, if $Z \sim \N(0,I)$, then $\sqrt{\tau_\alpha(t)} Z \sim \N(0,\tau_\alpha(t) I)$ is $\frac{1}{\tau_\alpha(t)}$-strongly log-concave.
Therefore, by a standard property of the preservation of strong log-concavity under convolution (e.g.,~\cite[Theorem~3.7(b)]{SW14}), $X_t = e^{-\alpha t} X_0 + \sqrt{\tau_\alpha(t)} Z$ is $(e^{-2\alpha t}\lambda^{-1} + \tau_\alpha(t))^{-1}$-strongly log-concave, as desired.

\subsection{Proof of Lemma~\ref{Lem:Bdd}}

We use the same notation as in Appendix~\ref{App:mmse}.
First, we have the following result.

\begin{lemma}\label{Lem:HesOU}
Along the OU flow for $\nu = \N(0,\frac{1}{\alpha}I)$, for all $y \in \R^n$,
\begin{align*} 
-\nabla^2_y \log \rho_{t}(y) = \frac{\alpha}{1-e^{-2\alpha t}}\Big(I - \frac{\alpha e^{-2\alpha t}}{1-e^{-2\alpha t}} \Cov(\rho_{0|t}(\cdot\,|\,y))\Big).
\end{align*}
\end{lemma}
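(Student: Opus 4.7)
The plan is to relate the Hessian of $\log \rho_t$ to the already-computed Hessian of $\log \rho_{0|t}(\cdot\,|\,y)$ from Lemma~\ref{Lem:HessCov}, using Bayes' rule and the fact that the forward conditional $\rho_{t|0}(\cdot\,|\,x)$ is an explicit Gaussian.

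First I would write the Bayes factorization
\[
\rho_{0|t}(x\,|\,y) \;=\; \frac{\rho_0(x)\,\rho_{t|0}(y\,|\,x)}{\rho_t(y)},
\]
take the logarithm of both sides, and differentiate twice with respect to $y$. The $\rho_0(x)$ factor is a constant in $y$ and drops out, leaving the pointwise identity
\[
\nabla^2_y \log \rho_{0|t}(x\,|\,y) \;=\; \nabla^2_y \log \rho_{t|0}(y\,|\,x) \;-\; \nabla^2_y \log \rho_t(y).
\]

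Next I would evaluate the middle Hessian using the OU solution~\eqref{Eq:OUSoln}, which says $X_t \mid \{X_0 = x\} \sim \N(e^{-\alpha t}x,\, \tau_\alpha(t) I)$. Hence $\log \rho_{t|0}(y\,|\,x) = -\frac{1}{2\tau_\alpha(t)}\|y - e^{-\alpha t}x\|^2 + \text{const}$, and therefore $\nabla^2_y \log \rho_{t|0}(y\,|\,x) = -\frac{1}{\tau_\alpha(t)} I$. Substituting Lemma~\ref{Lem:HessCov} into the left-hand side gives
\[
-\frac{\alpha^2 e^{-2\alpha t}}{(1-e^{-2\alpha t})^2}\Cov(\rho_{0|t}(\cdot\,|\,y)) \;=\; -\frac{1}{\tau_\alpha(t)} I \;-\; \nabla^2_y \log \rho_t(y),
\]
which, after solving for $-\nabla^2_y \log \rho_t(y)$ and using $\tau_\alpha(t)^{-1} = \alpha/(1-e^{-2\alpha t})$, yields exactly
\[
-\nabla^2_y \log \rho_t(y) = \frac{\alpha}{1-e^{-2\alpha t}}\Big(I - \frac{\alpha e^{-2\alpha t}}{1-e^{-2\alpha t}}\Cov(\rho_{0|t}(\cdot\,|\,y))\Big).
\]

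No step looks like a genuine obstacle: the only subtlety is that the identity from Lemma~\ref{Lem:HessCov} holds pointwise in $x$ (its right-hand side does not depend on $x$), so the combined relation is automatically an identity in $y$ alone without needing to integrate out $x$. If one prefers, one can integrate the equation against $\rho_{0|t}(\cdot\,|\,y)$ to absorb the $x$-dependence on the right-hand side of the Bayes identity, but this is unnecessary here. The rest is bookkeeping with $\tau_\alpha(t)$.
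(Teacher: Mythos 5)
Your proposal is correct and is essentially identical to the paper's own proof: both take the Bayes factorization $\rho_t(y)\,\rho_{0|t}(x\,|\,y) = \rho_0(x)\,\rho_{t|0}(y\,|\,x)$, differentiate twice in $y$, use the explicit Gaussian form of $\rho_{t|0}(\cdot\,|\,x)$ to get $-\nabla^2_y \log \rho_{t|0}(y\,|\,x) = \frac{\alpha}{1-e^{-2\alpha t}} I$, and substitute Lemma~\ref{Lem:HessCov} for the conditional term. Your remark that the $x$-independence of the right-hand side of Lemma~\ref{Lem:HessCov} makes integrating out $x$ unnecessary is exactly the point the paper relies on implicitly.
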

\begin{proof}
From the factorization
$$\rho_t(y) \, \rho_{0|t}(x\,|\,y) = \rho_0(x) \, \rho_{t|0}(y\,|\,x)$$
and by the exact solution for the OU flow and by Lemma~\ref{Lem:HessCov},
\begin{align*}
-\nabla^2_y \log \rho_t(y) &= -\nabla^2_y \log \rho_{t|0}(y\,|\,x) + \nabla^2_y \log \rho_{0|t}(x\,|\,y) \\
&= \frac{\alpha}{1-e^{-2\alpha t}} I - \frac{\alpha^2 e^{-2\alpha t}}{(1-e^{-2\alpha t})^2} \Cov(\rho_{0|t}(\cdot\,|\,y)) \\
&= \frac{\alpha}{1-e^{-2\alpha t}}\Big(I - \frac{\alpha e^{-2\alpha t}}{1-e^{-2\alpha t}} \Cov(\rho_{0|t}(\cdot\,|\,y))\Big)
\end{align*}
as desired.
\end{proof}

Using Lemma~\ref{Lem:HesOU}, we can prove Lemma~\ref{Lem:Bdd}.

\begin{proof}[Proof of Lemma~\ref{Lem:Bdd}]
Since $X_0 \sim \rho_0$ is $D$-bounded by assumption,
the conditional distribution $X_0\,|\,\{X_t=y\} \sim \rho_{0|t}(\cdot\,|\,y)$ is also $D$-bounded, so
$\Cov(\rho_{0|t}(\cdot\,|\,y)) \preceq D^2 I$ for all $y \in \R^n$.
Then by Lemma~\ref{Lem:HesOU},
\begin{align*}
-\nabla^2_y \log \rho_{t}(y) 
&= \frac{\alpha}{1-e^{-2\alpha t}}\left(I - \frac{\alpha e^{-2\alpha t}}{1-e^{-2\alpha t}} \Cov(\rho_{0|t}(\cdot\,|\,y))\right) \\
&\succeq \frac{\alpha}{1-e^{-2\alpha t}}\left(1 - \frac{D^2 \alpha e^{-2\alpha t}}{1-e^{-2\alpha t}} \right) I \\
&= \frac{\alpha}{1-e^{-2\alpha t}}\left(\frac{1 - (1+D^2 \alpha) e^{-2\alpha t}}{1-e^{-2\alpha t}} \right) I.
\end{align*}
If $t \ge \frac{1}{2\alpha} \log(1+D^2 \alpha)$, then the last expression above is nonnegative.
Thus, we conclude that if $t \ge \frac{1}{2\alpha} \log(1+D^2 \alpha)$, then $X_t$ is $\frac{\alpha}{1-e^{-2\alpha t}}\left(\frac{1 - (1+D^2 \alpha) e^{-2\alpha t}}{1-e^{-2\alpha t}} \right)$-strongly log-concave.
\end{proof}

\subsection{Proof of Theorem~\ref{Thm:Conv}}

By Lemmas~\ref{Lem:SuffConv} and~\ref{Lem:SLC}, it suffices to determine when the strong log-concavity estimate $\lambda_t = (\frac{e^{-2\alpha t}}{\lambda} + \frac{1-e^{-2\alpha t}}{\alpha})^{-1}$ exceeds $\frac{\alpha}{2}$.
We consider the two cases.

\begin{itemize}
  \item If $\lambda \ge \frac{\alpha}{2}$, then $\lambda_t \ge (\frac{e^{-2\alpha t}}{\alpha/2} + \frac{1-e^{-2\alpha t}}{\alpha})^{-1} = \frac{\alpha}{1+e^{-2\alpha t}} \ge \frac{\alpha}{2}$ for all $t \ge 0$.
Thus, $t(\lambda) = 0$ in this case.

  \item If $\lambda < \frac{\alpha}{2}$, then the inequality $\lambda_t \ge \frac{\alpha}{2}$ is equivalent to $\frac{e^{-2\alpha t}}{\lambda} \le - \frac{1-e^{-2\alpha t}}{\alpha} + \frac{2}{\alpha} = \frac{1+e^{-2\alpha t}}{\alpha}$.
Solving for $t$ yields $e^{2\alpha t} \ge \frac{\alpha}{\lambda}-1$, or equivalently, $t \ge \frac{1}{2\alpha} \log(\frac{\alpha}{\lambda}-1)$.
Thus, $t(\lambda) = \frac{1}{2\alpha} \log(\frac{\alpha}{\lambda}-1)$ in this case.
Note that since $\lambda < \frac{\alpha}{2}$, $\frac{\alpha}{\lambda}-1 > 0$, so $t(\lambda) > 0$.
\end{itemize}

\subsection{Proof of Theorem~\ref{Thm:ConvBdd}}

By Lemmas~\ref{Lem:SuffConv} and~\ref{Lem:Bdd}, it suffices to determine when the strong log-concavity estimate $\lambda_t = \frac{\alpha(1-(1+D^2\alpha)e^{-2\alpha t})}{(1-e^{-2\alpha t})^2}$ exceeds $\frac{\alpha}{2}$.
Here we already assume $t \ge \frac{1}{2\alpha} \log(1+D^2 \alpha)$ so $\lambda_t \ge 0$.

Letting $w = e^{-2\alpha t}$ for simplicity, the inequality $\lambda_t \ge \frac{\alpha}{2}$ is equivalent to $\frac{\alpha(1-(1+D^2\alpha)w)}{(1-w)^2} \ge \frac{\alpha}{2}$.
Dividing both sides by $\alpha > 0$ and clearing the denominator, this is equivalent to $2(1-(1+D^2\alpha)w) \ge (1-w)^2$.
Expanding the square and simplifying, this is equivalent to $w^2 + 2D^2\alpha w - 1 \le 0$, or equivalently, $(w+D^2\alpha)^2 \le 1+D^4\alpha^2$.
Since $w = e^{-2\alpha t} > 0$, we can take square root on both sides to obtain $w \le \sqrt{1+D^4\alpha^2} - D^2\alpha$.
Finally, taking logarithm on both sides and using the relation $\frac{1}{\sqrt{1+D^4\alpha^2}-D^2 \alpha} = \sqrt{1+D^4\alpha^2}+D^2\alpha$, we conclude that the inequality $\lambda_t \ge \frac{\alpha}{2}$ above is equivalent to $t \ge \frac{1}{2\alpha} \log (\sqrt{1+D^4\alpha^2} + D^2\alpha)$.

\subsection{Proof of Theorem~\ref{Thm:EventConvFI}}
\label{App:EventConvFI}

We use the same notation as in Appendix~\ref{App:mmse}.
We first present the following preliminary results.

\begin{lemma}\label{Lem:KnuOU}
Along the OU flow for $\nu = \N(0,\frac{1}{\alpha}I)$,
\begin{multline*}
K_\nu(X_0;X_t) = \frac{2\alpha^3 e^{-4\alpha t}}{(1-e^{-2\alpha t})^3} \Var(X_0\,|\,X_t) \\
- \frac{\alpha^4 e^{-4\alpha t}}{(1-e^{-2\alpha t})^4} \E\left[\|\Cov(\rho_{0|t}(\cdot\,|\,X_t)\|^2_{\HS}\right].
\end{multline*}
\end{lemma}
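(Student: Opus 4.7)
The plan is to apply the general decomposition from Lemma~\ref{Lem:KPsi} with $(X,Y)=(X_0,X_t)$, and then reduce every ingredient to the conditional covariance $\Cov(\rho_{0|t}(\cdot\,|\,y))$ using the Hessian identities already established in Lemmas~\ref{Lem:HessCov} and~\ref{Lem:HesOU}. The target formula involves only $\Var(X_0\,|\,X_t)$ and $\E[\|\Cov(\rho_{0|t}(\cdot\,|\,X_t))\|_{\HS}^2]$, so the whole exercise is algebraic once those substitutions are made.

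First I would compute the backward second-order Fisher information $\Psi(X_0\,|\,X_t)$. The key observation is that by Lemma~\ref{Lem:HessCov}, the Hessian $-\nabla^2_y \log \rho_{0|t}(x\,|\,y) = \frac{\alpha^2 e^{-2\alpha t}}{(1-e^{-2\alpha t})^2}\Cov(\rho_{0|t}(\cdot\,|\,y))$ is constant in $x$. Therefore the inner integral in the definition of $\Psi(X_0\,|\,X_t\!=\!y)$ simply returns the squared Hilbert--Schmidt norm of that matrix, and averaging over $\rho_t(y)$ yields
$$\Psi(X_0\,|\,X_t) = \frac{\alpha^4 e^{-4\alpha t}}{(1-e^{-2\alpha t})^4}\,\E\bigl[\|\Cov(\rho_{0|t}(\cdot\,|\,X_t))\|_{\HS}^2\bigr].$$
From the same Lemma~\ref{Lem:HessCov} I also read off $\widetilde\Phi(X_0\,|\,X_t\!=\!y) = \frac{\alpha^2 e^{-2\alpha t}}{(1-e^{-2\alpha t})^2}\Cov(\rho_{0|t}(\cdot\,|\,y))$.

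Next I would compute the cross term $-\nabla^2 \log \frac{\rho_t(y)}{\nu(y)}$ by combining Lemma~\ref{Lem:HesOU} with $-\nabla^2 \log \nu(y) = \alpha I$: the two scalar multiples of $I$ combine into $\frac{\alpha e^{-2\alpha t}}{1-e^{-2\alpha t}} I$, while the covariance term is unchanged, giving
$$-\nabla^2 \log \tfrac{\rho_t(y)}{\nu(y)} = \tfrac{\alpha e^{-2\alpha t}}{1-e^{-2\alpha t}} I - \tfrac{\alpha^2 e^{-2\alpha t}}{(1-e^{-2\alpha t})^2}\Cov(\rho_{0|t}(\cdot\,|\,y)).$$
Taking the Hilbert--Schmidt inner product with $\widetilde\Phi(X_0\,|\,X_t\!=\!y)$ produces two scalar summands: the $I$-piece gives $\frac{\alpha^3 e^{-4\alpha t}}{(1-e^{-2\alpha t})^3}\Tr(\Cov(\rho_{0|t}(\cdot\,|\,y)))$, and the $\Cov$-piece gives $-\frac{\alpha^4 e^{-4\alpha t}}{(1-e^{-2\alpha t})^4}\|\Cov(\rho_{0|t}(\cdot\,|\,y))\|_{\HS}^2$. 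Integrating over $\rho_t(y)\,dy$ and using the identity $\E[\Tr(\Cov(\rho_{0|t}(\cdot\,|\,X_t)))] = \Var(X_0\,|\,X_t)$ converts the first piece into a multiple of the conditional variance.

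Finally I would plug these into Lemma~\ref{Lem:KPsi}: the $\Psi(X_0\,|\,X_t)$ term combines with twice the negative $\|\Cov\|_{\HS}^2$ piece from the integral, giving a net coefficient of $-\frac{\alpha^4 e^{-4\alpha t}}{(1-e^{-2\alpha t})^4}$ on $\E[\|\Cov\|_{\HS}^2]$, while twice the trace piece produces $\frac{2\alpha^3 e^{-4\alpha t}}{(1-e^{-2\alpha t})^3}\Var(X_0\,|\,X_t)$. This is exactly the claimed identity. There is no real obstacle here beyond careful bookkeeping of the prefactors; the only subtle step is recognizing that because the conditional Hessian in Lemma~\ref{Lem:HessCov} does not depend on $x$, the $\Psi$ term partially cancels the quadratic-in-$\Cov$ contribution of the cross term, leaving a single power of the Hilbert--Schmidt norm in the final formula.
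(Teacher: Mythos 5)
Your proof is correct, and it takes a route that is organized differently from the paper's. The paper never invokes Lemma~\ref{Lem:KPsi} here: it computes $-\nabla^2_y \log \frac{\rho_t(y)}{\nu(y)}$ from Lemma~\ref{Lem:HesOU}, expands the squared Hilbert--Schmidt norm directly to get $K_\nu(X_t)$ (picking up an $\|I\|_{\HS}^2 = n$ term), separately evaluates $K_\nu(X_t\,|\,X_0) = \frac{n\alpha^2 e^{-4\alpha t}}{(1-e^{-2\alpha t})^2}$ from the Gaussian conditional law, and subtracts; the $n$-dependent terms cancel at the end. You instead feed the already-proved decomposition $K_\nu(X;Y) = \Psi(X\,|\,Y) + 2\int \rho_Y \langle -\nabla^2\log\frac{\rho_Y}{\nu}, \widetilde\Phi(X\,|\,Y\!=\!y)\rangle_{\HS}\,dy$ with the exact expressions for $\Psi$ and $\widetilde\Phi$ coming from the $x$-independence of the conditional Hessian in Lemma~\ref{Lem:HessCov}. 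Your bookkeeping checks out: $\Psi(X_0\,|\,X_t)$ carries coefficient $+1$ on $\frac{\alpha^4 e^{-4\alpha t}}{(1-e^{-2\alpha t})^4}\E[\|\Cov\|_{\HS}^2]$, the covariance part of the cross term contributes $-2$, and the identity part of the cross term contributes $\frac{2\alpha^3 e^{-4\alpha t}}{(1-e^{-2\alpha t})^3}\Var(X_0\,|\,X_t)$, giving exactly the stated formula. What your approach buys is reuse of the general machinery and the complete disappearance of the dimension-dependent terms (you never need $K_\nu(X_t\,|\,X_0)$); what the paper's approach buys is a self-contained computation that also exhibits $K_\nu(X_t)$ itself, which is of independent interest. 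The two are algebraically equivalent, since Lemma~\ref{Lem:KPsi} was itself proved by the same expand-the-square argument the paper applies directly here.
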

\begin{proof}
By Lemma~\ref{Lem:HesOU} and since $-\nabla^2 \nu(y) = \alpha I$, we have
\begin{align*} 
-\nabla^2_y \log \frac{\rho_{t}(y)}{\nu(y)} = \frac{\alpha e^{-2\alpha t}}{1-e^{-2\alpha t}}\Big(I - \frac{\alpha}{1-e^{-2\alpha t}} \Cov(\rho_{0|t}(\cdot\,|\,y))\Big).
\end{align*}
Taking the squared norm and expanding, we get
\begin{align*}
\left\|-\nabla^2_y \log \frac{\rho_{t}(y)}{\nu(y)}\right\|^2_{\HS}
&= \frac{\alpha^2 e^{-4\alpha t}}{(1-e^{-2\alpha t})^2}\Big(\|I\|^2_{\HS} \\
& ~ -\frac{2\alpha}{1-e^{-2\alpha t}} \Var(\rho_{0|t}(\cdot\,|\,y)) \\
& ~ + \frac{\alpha^2}{(1-e^{-2\alpha t})^2} \|\Cov(\rho_{0|t}(\cdot\,|\,y))\|^2_{\HS}\Big).
\end{align*}
Using $\|I\|^2_{\HS} = n$ and integrating over $\rho_t(y)$, we get
\begin{align*}
K_\nu(X_t) &= \frac{n \alpha^2 e^{-4\alpha t}}{(1-e^{-2\alpha t})^2}
-\frac{2\alpha^3 e^{-4\alpha t}}{(1-e^{-2\alpha t})^3} \Var(X_0\,|\,X_t) \\
& ~ + \frac{\alpha^4 e^{-4\alpha t}}{(1-e^{-2\alpha t})^4} \E\left[\|\Cov(\rho_{0|t}(\cdot\,|\,X_t)\|^2_{\HS}\right].
\end{align*}
Finally, since $K_\nu(X_t\,|\,X_0) = \frac{n \alpha^2 e^{-4\alpha t}}{(1-e^{-2\alpha t})^2}$, the above gives the desired expression for $K_\nu(X_0;X_t) = K_\nu(X_t\,|\,X_0) - K_\nu(X_t)$.
\end{proof}

Recall that $J(X) = \int_{\R^n} \rho(x) \|\nabla \log \rho(x)\|^2 dx$ is the (absolute) Fisher information of $X \sim \rho$.

\begin{lemma}\label{Lem:JX0XtOU}
Along the OU flow for $\nu = \N(0,\frac{1}{\alpha}I)$,
$$J(X_0\,|\,X_t) = J(X_0) + \frac{n\alpha e^{-2\alpha t}}{1-e^{-2\alpha t}}.$$
\end{lemma}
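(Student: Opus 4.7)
\subsection*{Proof proposal for Lemma~\ref{Lem:JX0XtOU}}

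The plan is to reduce $J(X_0 \mid X_t)$ to an expectation over the independent pair $(X_0, Z)$ from the explicit solution $X_t = e^{-\alpha t} X_0 + \sqrt{\tau_\alpha(t)}\, Z$, where $Z \sim \N(0,I)$ is independent of $X_0$ and $\tau_\alpha(t) = \frac{1-e^{-2\alpha t}}{\alpha}$. The central identity is that the score of the posterior $\rho_{0\mid t}(\cdot \mid y)$ with respect to $x$ decomposes via Bayes:
\begin{equation*}
\nabla_x \log \rho_{0\mid t}(x \mid y) = \nabla_x \log \rho_0(x) + \nabla_x \log \rho_{t\mid 0}(y \mid x),
\end{equation*}
since $\log \rho_t(y)$ does not depend on $x$.

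First, I would compute the second term using the Gaussian transition density $\rho_{t\mid 0}(y\mid x) \propto \exp(-\|y - e^{-\alpha t}x\|^2 / (2\tau_\alpha(t)))$, which gives
\begin{equation*}
\nabla_x \log \rho_{t\mid 0}(y\mid x) = \frac{e^{-\alpha t}}{\tau_\alpha(t)}\bigl(y - e^{-\alpha t} x\bigr).
\end{equation*}
Next, I would substitute $y = X_t$ and use $X_t - e^{-\alpha t} X_0 = \sqrt{\tau_\alpha(t)}\, Z$ to rewrite this residual as $\frac{e^{-\alpha t}}{\sqrt{\tau_\alpha(t)}}\, Z$. Then by definition
\begin{equation*}
J(X_0\mid X_t) = \E\!\left[\bigl\|\nabla \log \rho_{0\mid t}(X_0\mid X_t)\bigr\|^2\right] = \E\!\left[\Bigl\|\nabla \log \rho_0(X_0) + \tfrac{e^{-\alpha t}}{\sqrt{\tau_\alpha(t)}}\, Z\Bigr\|^2\right].
\end{equation*}

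Finally, I would expand the square. The cross term vanishes because $Z$ is independent of $X_0$ and has zero mean, leaving
\begin{equation*}
J(X_0\mid X_t) = J(X_0) + \frac{e^{-2\alpha t}}{\tau_\alpha(t)}\, \E[\|Z\|^2] = J(X_0) + \frac{n\alpha e^{-2\alpha t}}{1-e^{-2\alpha t}},
\end{equation*}
as desired. There is no genuine obstacle here: the computation is a direct consequence of the Bayes decomposition of scores and the fact that the additive noise in the OU solution is independent of the initial condition. The only small care needed is to verify that the mixed conditioning produces the correct joint expectation, which is immediate once $y$ is replaced by $X_t$ inside the outer expectation over $\rho_t$.
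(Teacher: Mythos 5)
Your proposal is correct and follows essentially the same route as the paper: both start from the Bayes decomposition $\nabla_x \log \rho_{0\mid t}(x\mid y) = \nabla_x \log \rho_0(x) + \nabla_x \log \rho_{t\mid 0}(y\mid x)$, compute the Gaussian transition score, expand the square, and observe that the cross term vanishes (your phrasing via independence of $Z$ and $X_0$ is the same fact as the paper's integration over $\rho_{t\mid 0}(\cdot\mid x)$ for fixed $x$). No gaps.
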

\begin{proof}
From the factorization
$$\rho_t(y) \, \rho_{0|t}(x\,|\,y) = \rho_0(x) \, \rho_{t|0}(y\,|\,x)$$
and by the exact solution for the OU flow we have
\begin{align*}
-\nabla_x \log \rho_{0|t}&(x\,|\,y) = -\nabla_x \log \rho_0(x) - \nabla_x \log \rho_{t|0}(y\,|\,x) \\
&~~~~~~ = -\nabla_x \log \rho_0(x) + \frac{\alpha e^{-\alpha t}}{1-e^{-2\alpha t}}(e^{-\alpha t}x-y).
\end{align*}
Squaring both sides and expanding, we get
\begin{align*}
\|-\nabla_x \log \rho_{0|t}&(x\,|\,y)\|^2 = \|-\nabla_x \log \rho_0(x)\|^2 \\
&+ \frac{\alpha^2 e^{-2\alpha t}}{(1-e^{-2\alpha t})^2}\|e^{-\alpha t}x-y\|^2 \\
&+ \frac{2\alpha e^{-\alpha t}}{1-e^{-2\alpha t}} \langle -\nabla_x \log \rho_0(x), \, e^{-\alpha t}x-y \rangle.
\end{align*}
We integrate both sides with respect to $\rho_{0t}(x,y)$.
On the left-hand side we get $J(X_0\,|\,X_t)$.
On the right-hand side, the first term gives $J(X_0)$.
Since $\rho_{t|0}(\cdot\,|\,x) = \N(e^{-\alpha t}x, \frac{1-e^{-2\alpha t}}{\alpha} I)$, the second term gives $\frac{\alpha^2 e^{-2\alpha t}}{(1-e^{-2\alpha t})^2} \cdot \frac{(1-e^{-2\alpha t})n}{\alpha} = \frac{n\alpha e^{-2\alpha t}}{1-e^{-2\alpha t}}$.
And by integrating over $\rho_{t|0}(y\,|\,x)$ first for each fixed $x \in \R^n$, the third term gives $0$.
Therefore,
$J(X_0\,|\,X_t) = J(X_0) + \frac{n\alpha e^{-2\alpha t}}{1-e^{-2\alpha t}}$,
as desired.
\end{proof}

By Lemma~\ref{Lem:JX0XtOU}, we have the following lower bound on the conditional variance along the OU flow.

\begin{lemma}\label{Lem:VarOU}
Along the OU flow for $\nu = \N(0,\frac{1}{\alpha}I)$,
$$\Var(X_0\,|\,X_t) \ge \frac{n^2}{J(X_0) + \frac{n\alpha e^{-2\alpha t}}{1-e^{-2\alpha t}}}.$$
\end{lemma}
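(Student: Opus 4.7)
The plan is to reduce the statement to the general Cramér–Rao-type inequality
$\Var(X) \, J(X) \ge n^2$, valid for any random variable $X$ on $\R^n$ with finite mean and Fisher information. This inequality follows from a one-line integration-by-parts identity and Cauchy–Schwarz: letting $\mu = \E[X]$, one has $\E[\langle X-\mu, \nabla \log \rho(X)\rangle] = -\int \nabla\cdot(x-\mu)\,\rho(x)\,dx = -n$, and squaring then applying Cauchy–Schwarz yields $n^2 \le \Var(X) \, J(X)$.

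Next, I would apply this pointwise to the conditional distribution $\rho_{0|t}(\cdot\,|\,y)$ for each fixed $y \in \R^n$. Writing $J(\rho_{0|t}(\cdot\,|\,y))$ for the (absolute) Fisher information of this conditional distribution, this gives
$$\Var(X_0\,|\,X_t = y) \;\ge\; \frac{n^2}{J(\rho_{0|t}(\cdot\,|\,y))}.$$
Integrating against $\rho_t(y)$, the left-hand side becomes $\Var(X_0\,|\,X_t)$. For the right-hand side, I invoke Jensen's inequality for the convex function $u \mapsto 1/u$ on $(0,\infty)$, which gives
$$\E_{Y \sim \rho_t}\!\left[\frac{n^2}{J(\rho_{0|t}(\cdot\,|\,Y))}\right] \;\ge\; \frac{n^2}{\E_{Y \sim \rho_t}[J(\rho_{0|t}(\cdot\,|\,Y))]} \;=\; \frac{n^2}{J(X_0\,|\,X_t)},$$
by the definition of conditional Fisher information.

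Finally, I substitute the closed-form expression for $J(X_0\,|\,X_t)$ from Lemma~\ref{Lem:JX0XtOU}, namely $J(X_0\,|\,X_t) = J(X_0) + \frac{n\alpha e^{-2\alpha t}}{1 - e^{-2\alpha t}}$, to obtain exactly the stated bound. The only subtle step is the Jensen inequality direction (it goes the right way only because $1/u$ is convex and Fisher information is positive), which is what makes the elementary pointwise Cramér–Rao bound combine with the conditional-Fisher-information identity into a useful conditional-variance lower bound; the rest is routine.
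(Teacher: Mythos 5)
Your proof is correct and follows essentially the same route as the paper: the pointwise Cram\'er--Rao inequality $\Var(X)\,J(X)\ge n^2$ applied to each conditional $\rho_{0|t}(\cdot\,|\,y)$, then averaging and lower-bounding $\E[1/J]$ by $1/\E[J]$ (the paper gets this last step from Cauchy--Schwarz rather than Jensen, but the two are interchangeable here), and finally substituting the identity of Lemma~\ref{Lem:JX0XtOU}. No gaps.
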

\begin{proof}
For any random variable $X \sim \rho$ in $\R^n$ with a smooth density, we recall the variance-Fisher information comparison inequality (e.g., see~\cite{Dembo91}):
$$\Var(X) \, J(X) \ge n^2$$
which also follows directly from Cauchy-Schwarz inequality and integration by parts.
We apply this to the conditional density $\rho_{0|t}(\cdot\,|\,y)$ for each $y \in \R^n$ to get
$$\Var(\rho_{0|t}(\cdot\,|\,y)) \ge \frac{n^2}{J(\rho_{0|t}(\cdot\,|\,y))}.$$
We take expectation on both sides over $Y = X_t$, and use the estimate $\E[\frac{1}{J}] \ge \frac{1}{\E[J]}$ by Cauchy-Schwarz, to get
$$\Var(X_0\,|\,X_t) \ge \E\left[\frac{n^2}{J(\rho_{0|t}(\cdot\,|\,X_t))}\right] \ge \frac{n^2}{J(X_0\,|\,X_t)}.$$
Finally, we plug in the expression for $J(X_0\,|\,X_t)$ from Lemma~\ref{Lem:JX0XtOU} to get the desired result.
\end{proof}

Recall that $M_4(X) = \E[\|X-\mu\|^4]$ is the fourth moment of a random variable $X$ with mean $\E[X] = \mu$.
We have the following general estimate; see also~\cite[Lemma~14]{WJ18}.

\begin{lemma}\label{Lem:M4}
For any joint random variable $(X,Y)$,
$$\E[\|\Cov(\rho_{X|Y}(\cdot \,|\,Y))\|^2_{\HS}] \le M_4(X).$$
\end{lemma}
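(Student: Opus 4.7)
The plan is to reduce the bound to Jensen's inequality applied to the convex map $M \mapsto \|M\|_{\HS}^2$, after first centering conditional quantities at the unconditional mean $\mu = \E[X]$. The key small observation that makes this work is that for positive semidefinite matrices, shifting by another PSD matrix only increases the Hilbert-Schmidt norm squared.

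First I would write the conditional covariance using the unconditional mean as a centering point. For each $y$, letting $\mu_y = \E[X\,|\,Y=y]$, the parallel-axis identity gives
\begin{equation*}
\Cov(\rho_{X|Y}(\cdot\,|\,y)) = \E\big[(X-\mu)(X-\mu)^\top \,\big|\, Y=y\big] - (\mu_y - \mu)(\mu_y - \mu)^\top.
\end{equation*}
Call the left-hand side $A_y$ and the first term on the right $B_y$, so that $B_y = A_y + C_y$ where $C_y = (\mu_y-\mu)(\mu_y-\mu)^\top \succeq 0$ and $A_y, B_y \succeq 0$ as well.

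Next I would use the elementary fact that for PSD matrices $A, C$,
\begin{equation*}
\|A+C\|_{\HS}^2 = \|A\|_{\HS}^2 + 2\Tr(AC) + \|C\|_{\HS}^2 \ge \|A\|_{\HS}^2,
\end{equation*}
since $\Tr(AC) \ge 0$ when both factors are PSD. Applied to $A_y, C_y$, this yields the pointwise bound
\begin{equation*}
\|\Cov(\rho_{X|Y}(\cdot\,|\,y))\|_{\HS}^2 \le \big\|\E[(X-\mu)(X-\mu)^\top \,|\, Y=y]\big\|_{\HS}^2.
\end{equation*}
Then, by Jensen's inequality applied to the convex function $M \mapsto \|M\|_{\HS}^2$, together with the identity $\|vv^\top\|_{\HS}^2 = \|v\|^4$ for any vector $v$,
\begin{equation*}
\big\|\E[(X-\mu)(X-\mu)^\top \,|\, Y=y]\big\|_{\HS}^2 \le \E\big[\|X-\mu\|^4 \,\big|\, Y=y\big].
\end{equation*}
Finally, taking expectation over $Y$ and using the tower property gives $\E\big[\|\Cov(\rho_{X|Y}(\cdot\,|\,Y))\|_{\HS}^2\big] \le \E[\|X-\mu\|^4] = M_4(X)$, which is the desired statement.

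I do not foresee any serious obstacle here: both the parallel-axis identity and Jensen's inequality are standard, and the only subtle step is recognizing that centering at $\mu$ rather than at $\mu_Y$ is what allows Jensen to give a clean $M_4(X)$ bound. (If one instead centers at $\mu_Y$, one obtains $\E[\|X-\mu_Y\|^4]$, which has no clean comparison with $M_4(X)$, so the PSD-shift inequality above is what lets us swap $\mu_Y$ for $\mu$ at the cost of a nonnegative term that we discard.)
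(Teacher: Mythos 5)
Your proof is correct, but it follows a genuinely different route from the paper's. The paper works with scalars from the start: it bounds $\|\Cov(\rho_{X|Y}(\cdot\,|\,y))\|_{\HS}^2 \le (\Tr \Cov(\rho_{X|Y}(\cdot\,|\,y)))^2 = \Var(\rho_{X|Y}(\cdot\,|\,y))^2$ via the eigenvalue inequality $\sum_i \lambda_i^2 \le (\sum_i \lambda_i)^2$ for $\lambda_i \ge 0$, then uses the fact that the variance is the minimum mean squared deviation about any point to replace the conditional mean by $\mu$, and finally applies Cauchy--Schwarz to pass from $(\E[\|X-\mu\|^2\,|\,Y=y])^2$ to $\E[\|X-\mu\|^4\,|\,Y=y]$. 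You instead stay at the matrix level: the parallel-axis decomposition $B_y = A_y + C_y$ with $C_y \succeq 0$, the monotonicity $\|A_y + C_y\|_{\HS}^2 \ge \|A_y\|_{\HS}^2$ (valid since $\Tr(A_y C_y) \ge 0$ for PSD factors), and matrix Jensen with $\|vv^\top\|_{\HS}^2 = \|v\|^4$. Both arguments hinge on the same essential idea --- recentering at the unconditional mean $\mu$ costs nothing --- and both arrive at the identical pointwise bound $\E[\|X-\mu\|^4\,|\,Y=y]$ before invoking the tower property. Your intermediate quantity $\|\E[(X-\mu)(X-\mu)^\top\,|\,Y=y]\|_{\HS}^2$ is never larger than the paper's $(\E[\|X-\mu\|^2\,|\,Y=y])^2$ (since $\|M\|_{\HS}^2 \le (\Tr M)^2$ for $M \succeq 0$), so your chain is at least as tight at each stage, though the final bound $M_4(X)$ is the same; the paper's version is slightly more elementary in that it never needs Jensen's inequality for matrix-valued random variables.
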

\begin{proof}
Let $\mu = \E[X]$.
For each $y \in \R^n$,
\begin{align*}
\|\Cov(\rho_{X|Y}(\cdot\,|\,y))\|^2_{\HS} &\le (\Tr(\Cov(\rho_{X|Y}(\cdot\,|\,y))))^2 \\
&= (\Var(\rho_{X|Y}(\cdot\,|\,y)))^2 \\
&\le \left(\E_{\rho_{X|Y}(\cdot\,|\,y)}[\|X-\mu\|^2]\right)^2 \\
&\le \E_{\rho_{X|Y}(\cdot\,|\,y)}[\|X-\mu\|^4].
\end{align*}
The first inequality above follows from $\sum_{i=1}^n \lambda_i(y)^2 \le (\sum_{i=1}^n \lambda_i(y))^2$, where $\lambda_i(y) \ge 0$ are the eigenvalues of $\Cov(\rho_{X|Y}(\cdot\,|\,y))$;
the second inequality follows from the definition of the variance as the minimum square deviation around any point;
and the third inequality follows from Cauchy-Schwarz.
Now taking expectation over $Y$ and using the tower property of expectation, we obtain the desired result.
\end{proof}

We also have the following simple estimate on the root of a cubic polynomial.

\begin{lemma}\label{Lem:Cubic}
Consider the cubic polynomial $p(w) = w^3 + aw^2 + bw + 1$, $w \in \R$, for some $a,b \in \R$.
Note $p(0) = 1 > 0$, and assume $p(1) = 2+a+b < 0$, so there exists a unique root $w_0 \in (0,1)$.
Then we have $w_0 \ge 1/(\frac{a^2}{3}-b)$.
\end{lemma}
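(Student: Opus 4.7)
The plan is to sidestep any analysis of the cubic's graph and instead work directly with the identity $p(w_0)=0$. First I would rewrite this identity in the form $w_0\cdot q(w_0)=1$, where I set $q(w):=-w^2-aw-b$. Since $w_0\in(0,1)$ is positive, this forces $q(w_0)=1/w_0>0$. Consequently the target inequality $w_0\ge 1/(\frac{a^2}{3}-b)$ is equivalent, after inverting, to the pair of inequalities $0<q(w_0)\le \frac{a^2}{3}-b$. The positivity of $q(w_0)$ is free, so the content lies entirely in the upper bound.

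Second, I would simply compute the difference
$$\left(\frac{a^2}{3}-b\right)-q(w_0)\;=\;w_0^{2}+a\,w_0+\frac{a^2}{3}.$$
This is a quadratic in $w_0$ whose discriminant is $a^2-4\cdot\frac{a^2}{3}=-\frac{a^2}{3}\le 0$; equivalently, completing the square gives $(w_0+\frac{a}{2})^{2}+\frac{a^2}{12}$. Either way, the expression is nonnegative for every real $w_0$, so the bound $q(w_0)\le \frac{a^2}{3}-b$ holds automatically at the root.

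Finally, I would combine the two inequalities: since $q(w_0)>0$ and $q(w_0)\le \frac{a^2}{3}-b$, the right-hand side is itself strictly positive, so inverting the chain is legitimate and yields $w_0=1/q(w_0)\ge 1/(\frac{a^2}{3}-b)$, as required. There is no real obstacle in this argument. The only non-mechanical step is selecting the substitution that turns a lower bound on $w_0$ into an unconditional upper bound on a quadratic in $w_0$; once that is in place the hypotheses $p(1)<0$ and uniqueness of the root in $(0,1)$ play no role beyond guaranteeing that a positive root $w_0$ exists to plug in.
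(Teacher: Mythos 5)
Your proof is correct, and it takes a genuinely different route from the paper. The paper argues via the derivative: it bounds the slope $p'(w)=3w^2+2aw+b$ from below by its global minimum $b-\frac{a^2}{3}$, so the graph descending from $(0,1)$ to the axis at $w_0$ (in effect, the mean value theorem) forces $w_0\ge 1/(\frac{a^2}{3}-b)$. You instead work purely algebraically with the root identity $w_0\,q(w_0)=1$, $q(w)=-w^2-aw-b$, reduce the claim to the unconditional inequality $q(w_0)\le\frac{a^2}{3}-b$, and verify it by noting that $w_0^2+aw_0+\frac{a^2}{3}=(w_0+\frac{a}{2})^2+\frac{a^2}{12}\ge 0$. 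All steps check out, including the positivity of $\frac{a^2}{3}-b$ needed to invert, which you correctly extract from $q(w_0)=1/w_0>0$. Two remarks. First, as you observe, your argument uses only that $w_0$ is a positive root, so it is marginally more general than stated. Second, your method in fact proves a slightly sharper bound: taking $\frac{a^2}{4}$ in place of $\frac{a^2}{3}$ gives $\bigl(\frac{a^2}{4}-b\bigr)-q(w_0)=(w_0+\frac{a}{2})^2\ge 0$, hence $w_0\ge 1/(\frac{a^2}{4}-b)$. Since the paper explicitly notes that the time threshold in Theorem~\ref{Thm:EventConvFI} is loose because of this cubic-root estimate, your approach would yield a (modestly) improved constant there, whereas the tangent-line method is stuck at $\frac{a^2}{3}$ because that is the true global minimum of the slope.
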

\begin{proof}
At each point $w$, the tangent line to the cubic polynomial has slope $3w^2+2aw+b$, which is bounded below by $-\frac{a^2}{3}+b$.
Starting at the point $(w,p(w)) = (0,1)$, the line with slope $-\frac{a^2}{3}+b$ crosses the $w$-axis at $1/(\frac{a^2}{3}-b)$.
Therefore, the root $w_0$ must be at least $1/(\frac{a^2}{3}-b)$, as desired.
\end{proof}

We are now ready to prove Theorem~\ref{Thm:EventConvFI}.

\begin{proof}[Proof of Theorem~\ref{Thm:EventConvFI}]
By the second identity in Lemma~\ref{Lem:OUMutDer}, and using the formulae in Lemmas~\ref{Lem:mmse} and~\ref{Lem:KnuOU},
followed by the inequalities in Lemmas~\ref{Lem:VarOU} and~\ref{Lem:M4}, we have
\begin{align*}
\frac{1}{2} & \frac{d^2}{dt^2} I(X_0;X_t) = K_\nu(X_0;X_t) + \alpha J_\nu(X_0;X_t) \\
&= \left(\frac{2\alpha^3 e^{-4\alpha t}}{(1-e^{-2\alpha t})^3} + \frac{\alpha \cdot \alpha^2 e^{-2\alpha t}}{(1-e^{-2\alpha t})^2} \right) \Var(X_0\,|\,X_t) \\
& ~~~~~~ - \frac{\alpha^4 e^{-4\alpha t}}{(1-e^{-2\alpha t})^4} \E\left[\|\Cov(\rho_{0|t}(\cdot\,|\,X_t)\|^2_{\HS}\right] \\
&= \frac{\alpha^3 e^{-2\alpha t}}{(1-e^{-2\alpha t})^4} \Big( (1-e^{-4\alpha t}) \Var(X_0\,|\,X_t) \\
& ~~~~~~~~~~~~~ - \alpha e^{-2\alpha t} \E\left[\|\Cov(\rho_{0|t}(\cdot\,|\,X_t)\|^2_{\HS}\right] \Big) \\
&\ge \frac{\alpha^3 e^{-2\alpha t}}{(1-e^{-2\alpha t})^4} \Big( (1-e^{-4\alpha t}) \cdot \frac{n^2}{(J(X_0) + \frac{n\alpha e^{-2\alpha t}}{1-e^{-2\alpha t}})} \\
& ~~~~~~~~~~~~~ - \alpha e^{-2\alpha t} \cdot M_4(X_0) \Big).
\end{align*}
To show $I(X_0;X_t)$ is convex at $t$, it suffices to find when the last expression in the parenthesis above is nonnegative.

For simplicity let $w = e^{-2\alpha t} \in (0,1)$, $J \equiv J(X_0)$, and $M \equiv M_4(X_0)$.
Then mutual information is convex when
$$\frac{n^2(1-w^2)}{J + \frac{n\alpha w}{1-w}} \ge \alpha w M.$$
Clearing the denominator and simplifying, this is equivalent to $n^2(1-w^2)(1-w) \ge \alpha w(1-w) MJ + n \alpha^2 w^2 M$.
Expanding and simplifying further, we get that mutual information is convex when the cubic polynomial
$$p(w) = w^3 + \Big(-1+\frac{\alpha MJ}{n^2} - \frac{\alpha^2 M}{n}\Big) w^2 + \Big(-1-\frac{\alpha MJ}{n^2}\Big) w + 1$$
is nonnegative.
Observe that $p(0) = 1 > 0$ and $p(1) = -\frac{\alpha^2 M}{n} < 0$.
Moreover, $\lim_{w \to +\infty} p(w) = +\infty$ and $\lim_{w \to -\infty} p(w) = -\infty$.
Therefore, $p$ has a unique root $w_0 \in (0,1)$, and $p(w) \ge 0$ for $0 < w \le w_0$.
By Lemma~\ref{Lem:Cubic} with $a = -1+\frac{\alpha MJ}{n^2} - \frac{\alpha^2 M}{n}$ and $b = -1-\frac{\alpha MJ}{n^2}$, we have the estimate $w_0 \ge 1/(\frac{a^2}{3}-b)$.
Therefore, $p(w) \ge 0$ when $w \le 1/(\frac{a^2}{3}-b)$.
Since $w = e^{-2\alpha t}$, the latter is equivalent to
$$t \ge \frac{1}{2\alpha} \log \left(\frac{a^2}{3}-b\right).$$
Plugging in the definitions of $a$ and $b$ gives us the desired conclusion.
\end{proof}

\subsection{Proof of Theorem~\ref{Thm:GenMixt}}

We first recall the following estimate from~\cite{WJ18}.

\begin{lemma}[{from~\cite[Lemma~15]{WJ18}}]\label{Lem:Log2}
Let $b > 0$, $c \ge \max\{1,\frac{26}{b}\}$, and $Z \sim \N(0,1)$.
Then
$$\E[\log(1 + be^{cZ-\frac{c^2}{2}})\big] \le 3be^{-0.085 c^2}.$$
\end{lemma}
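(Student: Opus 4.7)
The plan is to replace $\log(1+u)$ by the single clean upper bound $2\sqrt{u}$ valid on $u \ge 0$, which reduces the expectation to a Gaussian moment generating function that can be evaluated in closed form. The inequality $\log(1+u) \le 2\sqrt{u}$ holds because $f(u) = 2\sqrt{u} - \log(1+u)$ satisfies $f(0) = 0$ and has positive derivative $f'(u) = \frac{1}{\sqrt{u}} - \frac{1}{1+u} > 0$ for $u > 0$ (equivalently, $(1+u)^2 > u$).

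Substituting $u = be^{cZ-c^2/2}$, pulling constants out of the expectation, and using the Gaussian moment generating function $\E[e^{tZ}] = e^{t^2/2}$ at $t = c/2$ gives
\[
\E\bigl[\log(1+be^{cZ-c^2/2})\bigr] \le 2\sqrt{b}\, e^{-c^2/4}\, \E[e^{cZ/2}] = 2\sqrt{b}\, e^{-c^2/8}.
\]

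What remains is the purely numerical inequality $2\sqrt{b}\,e^{-c^2/8} \le 3b\,e^{-0.085\,c^2}$ under the hypotheses. Rearranging, this is equivalent to $\tfrac{2}{3\sqrt{b}} \le e^{0.04\,c^2}$, where $0.04 = \tfrac{1}{8} - 0.085$ is the slack between the decay rate obtained and the decay rate claimed. I would verify this by splitting on $b$. If $b \ge 4/9$, the left side is at most $1$ while the right side is at least $e^{0.04} > 1$ for $c \ge 1$. If $b < 4/9$, the hypothesis $c \ge 26/b$ forces $c > 58$, so the right side exceeds $e^{134}$, while $\tfrac{2}{3\sqrt{b}} \le \tfrac{2\sqrt{c}}{3\sqrt{26}}$ (using $b \ge 26/c$) grows only like $\sqrt{c}$, and the inequality holds with enormous room.

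The only nonroutine ingredient is locating the one-line bound $\log(1+u) \le 2\sqrt{u}$; once it is in hand, the rest is a Gaussian MGF computation followed by bookkeeping. I do not anticipate a genuine obstacle. The slack $0.04\,c^2$ between the native decay $c^2/8$ and the claimed decay $0.085\,c^2$ is exactly what permits the $\sqrt{b}$ versus $b$ mismatch to be absorbed, and the constant $26$ in the hypothesis is comfortable rather than tight: any constant making $bc$ safely large would suffice.
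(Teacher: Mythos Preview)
Your proof is correct. The bound $\log(1+u)\le 2\sqrt{u}$ together with the Gaussian MGF gives $2\sqrt{b}\,e^{-c^2/8}$ cleanly, and your case split on $b$ correctly absorbs the $\sqrt{b}$ versus $b$ discrepancy using the slack $0.04\,c^2$; the arithmetic in both cases checks out.

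There is nothing to compare against here: the paper does not prove this lemma but simply quotes it from~\cite[Lemma~15]{WJ18}. Your argument is thus a self-contained replacement for the citation. It is worth noting that your approach actually yields the sharper intermediate bound $2\sqrt{b}\,e^{-c^2/8}$, which has a better exponent ($1/8$ versus $0.085$) at the cost of the prefactor scaling as $\sqrt{b}$ rather than $b$; the stated form $3be^{-0.085c^2}$ is then a deliberate weakening that trades exponent for a linear-in-$b$ prefactor, which is what the application in Theorem~\ref{Thm:GenMixt} uses when summing over pairs $(i,j)$.
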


We can prove Theorem~\ref{Thm:GenMixt} by adapting the calculations from~\cite[Theorem~4]{WJ18}.

\begin{proof}[Proof of Theorem~\ref{Thm:GenMixt}]
Since $X_0 \sim \sum_{i=1}^k p_i \delta_{\mu_i}$, along the OU flow for $\nu = \N(0,\frac{1}{\alpha}I)$, $X_t \sim \sum_{i=1}^k p_i \N(e^{-\alpha t} \mu_i, \tau_\alpha(t) I)$ where $\tau_\alpha(t) = \frac{1}{\alpha}(1-e^{-2\alpha t})$.
The density $\rho_t$ of $X_t$ is 
$$\rho_t(y) = \frac{1}{(2\pi \tau_\alpha(t))^{n/2}} \sum_{i=1}^k p_i e^{-\frac{\|y-e^{-\alpha t} \mu_i\|^2}{2\tau_\alpha(t)}}.$$
The entropy of $X_t$ is
$$H(X_t)
= \frac{n}{2} \log (2\pi \tau_\alpha(t)) - \E\left[ \log \left( \sum_{i=1}^k p_i e^{-\frac{\|X_t-e^{-\alpha t} \mu_i\|^2}{2\tau_\alpha(t)}} \right) \right].$$
The expectation is over the mixture $X_t \sim \sum_{i=1}^k p_i \N(e^{-\alpha t} \mu_i, \tau_\alpha(t) I)$, which we split into a sum over $i=1,\dots,k$ of the individual expectations over $Y \sim \N(e^{-\alpha t} \mu_i, \tau_\alpha(t) I)$.
When $Y \sim \N(e^{-\alpha t} \mu_i, \tau_\alpha(t) I)$, we write $Y = e^{-\alpha t}\mu_i + \sqrt{\tau_\alpha(t)} Z$ where $Z \sim \N(0,I)$.
Then we can write the entropy above as
{\small
\begin{align*}
&H(X_t) - \frac{n}{2} \log (2\pi \tau_\alpha(t)) \\
=& - \sum_{i=1}^k p_i \E\Big[\log\Big(p_i e^{-\frac{\|Z\|^2}{2}} + \sum_{j \neq i} p_j e^{-\frac{\|\sqrt{\tau_\alpha(t)}Z+e^{-\alpha t}(\mu_i-\mu_j)\|^2}{2\tau_\alpha(t)}}\Big)\Big]
\end{align*}
\begin{align*}
&= - \sum_{i=1}^k p_i \E\Big[ \log p_i - \frac{\|Z\|^2}{2} \\
&~~~~~ + \log \Big(1 +  \sum_{j \neq i} \frac{p_j}{p_i} e^{\frac{\|Z\|^2}{2}-\frac{\|\sqrt{\tau_\alpha(t)}Z+e^{-\alpha t}(\mu_i-\mu_j)\|^2}{2\tau_\alpha(t)}}\Big)\Big] \\
&= h(p) + \frac{n}{2} \\
&~~~~~ - \sum_{i=1}^k p_i \E\Big[ \log \Big(1 +  \sum_{j \neq i} \frac{p_j}{p_i} e^{\frac{\|Z\|^2}{2}-\frac{\|\sqrt{\tau_\alpha(t)}Z+e^{-\alpha t}(\mu_i-\mu_j)\|^2}{2\tau_\alpha(t)}}\Big)\Big]
\end{align*}
}
where $h(p) = -\sum_{i=1}^k p_i \log p_i$ is the discrete entropy.

Since $H(X_t\,|\,X_0) = \frac{n}{2} \log (2\pi e \tau_\alpha(t))$,
we have for mutual information
\begin{multline*}
h(p) - I(X_0;X_t) \\
= \sum_{i=1}^k p_i \E\Big[ \log \Big(1 +  \sum_{j \neq i} \frac{p_j}{p_i} e^{\frac{\|Z\|^2}{2}-\frac{\|\sqrt{\tau_\alpha(t)}Z+e^{-\alpha t}(\mu_i-\mu_j)\|^2}{2\tau_\alpha(t)}}\Big)\Big].
\end{multline*}

Clearly $h(p)-I(X_0;X_t) \ge 0$ since the logarithm on the right-hand side above is positive.

On the other hand, using the inequality $\log(1+\sum_j x_j) \le \sum_j \log(1+x_j)$ for $x_j > 0$, we also have the upper bound
\begin{multline*}
h(p)-I(X_0;X_t) \\
\le \sum_{i=1}^k p_i \sum_{j \neq i} \E\Big[ \log \Big(1 + \frac{p_j}{p_i} e^{\frac{\|Z\|^2}{2}-\frac{\|\sqrt{\tau_\alpha(t)}Z+e^{-\alpha t}(\mu_i-\mu_j)\|^2}{2\tau_\alpha(t)}}\Big)\Big].
\end{multline*}
For each $i \neq j$, the exponent on the right-hand side above is
\begin{multline*}
\frac{\|Z\|^2}{2}-\frac{\|\sqrt{\tau_\alpha(t)}Z+e^{-\alpha t}(\mu_i-\mu_j)\|^2}{2\tau_\alpha(t)} \\
= -\frac{e^{-\alpha t}}{\sqrt{\tau_\alpha(t)}} \langle Z, \mu_i-\mu_j \rangle - \frac{e^{-2\alpha t}\|\mu_i-\mu_j\|^2}{2\tau_\alpha(t)}
\end{multline*}
which has the $\N(-\frac{e^{-2\alpha t}\|\mu_i-\mu_j\|^2}{2\tau_\alpha(t)},\frac{e^{-2\alpha t}\|\mu_i-\mu_j\|^2}{\tau_\alpha(t)})$ distribution in $\R$, so it has the same distribution as $-\frac{e^{-2\alpha t}\|\mu_i-\mu_j\|^2}{2\tau_\alpha(t)} + \frac{e^{-\alpha t}\|\mu_i-\mu_j\|}{\sqrt{\tau_\alpha(t)}} Z_1$ where $Z_1 \sim \N(0,1)$ is the standard one-dimensional Gaussian.
Thus, we can write the upper bound above as
$$h(p) - I(X;Y) \le \sum_{i=1}^k p_i \sum_{j \neq i} \E\left[ \log \left(1 + b_{ij} e^{c_{ij} Z_1 - \frac{c_{ij}^2}{2}}\right)\right]$$
where $b_{ij} = \frac{p_j}{p_i}$, $c_{ij} = \frac{e^{-\alpha t}\|\mu_i-\mu_j\|}{\sqrt{\tau_\alpha(t)}}$, and $Z_1 \sim \N(0,1)$ in $\R$.

By Lemma~\ref{Lem:Log2}, if $c_{ij} \ge \max\{1,\frac{26}{b_{ij}}\}$, then we have
\begin{align*}
h(p) - I(X_0;X_t) \le 3 \sum_{i=1}^k p_i \sum_{j \neq i} b_{ij} e^{-0.085c_{ij}^2}.
\end{align*}
Note that $b_{ij} = \frac{p_j}{p_i} \le \|p\|_\infty$ and $c_{ij}^2 = \frac{e^{-2\alpha t}\|\mu_i-\mu_j\|^2}{\tau_\alpha(t)} \ge \frac{e^{-2\alpha t} m^2}{\tau_\alpha(t)} = \frac{\alpha m^2}{e^{2\alpha t}-1}$, so
\begin{align*}
h(p) - I(X_0;X_t) &\le 3 \sum_{i=1}^k p_i \sum_{j \neq i} \|p\|_\infty e^{-0.085 \frac{\alpha m^2}{e^{2\alpha t}-1}} \\
&= 3(k-1) \|p\|_\infty e^{-0.085 \frac{\alpha m^2}{e^{2\alpha t}-1}}.
\end{align*}
Now, the condition $c_{ij} \ge \max\{1,\frac{26}{b_{ij}}\}$ is equivalent to 
$$e^{2\alpha t} \le 1+\frac{\alpha^2 \|\mu_i-\mu_j\|^2}{\max\{1,\frac{26}{b_{ij}}\}^2}.$$
Since $\|\mu_i-\mu_j\|^2 \ge m^2$ and $\frac{1}{b_{ij}} = \frac{p_i}{p_j} \le \|p\|_\infty$,
 the condition above is satisfied when
$$e^{2\alpha t} \le 1+ \frac{\alpha^2 m^2}{\max\{1,26\|p\|_\infty\}^2} = 1+\frac{\alpha^2 m^2}{676\|p\|_\infty^2}.$$

Thus, we conclude that if $t \le \frac{1}{2\alpha} \log(1+ \frac{\alpha^2 m^2}{676\|p\|_\infty^2})$, then
\begin{align*}
h(p) - I(X_0;X_t) \le 3 (k-1) \|p\|_\infty e^{-0.085\frac{\alpha m^2}{e^{2\alpha t}-1}}
\end{align*}
as desired.
\end{proof}

\subsection{Proof of Corollary~\ref{Cor:Der}}

From Theorem~\ref{Thm:GenMixt}, we have for small $t$,
$$\left|\frac{I(X_0;X_t) - h(p)}{t^\ell} \right| \le 3(k-1) \|p\|_\infty \frac{e^{-0.085 \frac{\alpha m^2}{e^{2\alpha t}-1}}}{t^\ell}.$$
The right-hand side tends to $0$ as $t \to 0$.
Inductively, this implies all derivatives of $I(X_0;X_t)$ converge to $0$ exponentially fast as $t \to 0$.

\end{document}